\newlength{\bracewidth}
\newcommand{\myunderbrace}[2]{\settowidth{\bracewidth}{$#1$}#1\hspace*{-1\bracewidth}\smash{\underbrace{\makebox{\phantom{$#1$}}}_{#2}}}
\newcommand*\mystrut[1]{\vrule width0pt height0pt depth#1\relax}
\newcommand{\ee}{\mathrm{e}}
\newcommand{\ii}{\mathrm{i}}
\newcommand{\eig}{\mathcal{E}}
\newtheorem{theorem}{Theorem}
\newtheorem{corollary}{Corollary}
\newtheorem*{conjecture}{Conjecture}
\definecolor{lgrey}{gray}{0.9}
\begin{document}

\title{Entanglement production in bosonic systems: Linear and logarithmic growth}
\author{Lucas Hackl}
\affiliation{Institute for Gravitation and the Cosmos, The Pennsylvania State University, University Park, PA 16802, USA}
\affiliation{Department of Physics, The Pennsylvania State University, University Park, PA 16802, USA}
\author{Eugenio Bianchi}
\affiliation{Institute for Gravitation and the Cosmos, The Pennsylvania State University, University Park, PA 16802, USA}
\affiliation{Department of Physics, The Pennsylvania State University, University Park, PA 16802, USA}
\author{Ranjan Modak}
\affiliation{Department of Physics, The Pennsylvania State University, University Park, PA 16802, USA}
\author{Marcos Rigol}
\affiliation{Department of Physics, The Pennsylvania State University, University Park, PA 16802, USA}

\begin{abstract}
We study the time evolution of the entanglement entropy in bosonic systems with time-independent, or time-periodic, Hamiltonians. In the first part, we focus on quadratic Hamiltonians and Gaussian initial states. We show that all quadratic Hamiltonians can be decomposed into three parts: (a) unstable, (b) stable, and (c) metastable. If present, each part contributes in a characteristic way to the time-dependence of the entanglement entropy: (a) linear production, (b) bounded oscillations, and (c) logarithmic production. In the second part, we use numerical calculations to go beyond Gaussian states and quadratic Hamiltonians. We provide numerical evidence for the conjecture that entanglement production through quadratic Hamiltonians has the same asymptotic behavior for non-Gaussian initial states as for Gaussian ones. Moreover, even for non-quadratic Hamiltonians, we find a similar behavior at intermediate times. Our results are of relevance to understanding entanglement production for quantum fields in dynamical backgrounds and ultracold atoms in optical lattices.
\end{abstract}

\maketitle

\section{Introduction}
Entanglement production has been extensively studied in physical systems ranging from quantum fields and gravity \cite{Calabrese:2005in, calabrese2007quantum, cotler2016entanglement, Hubeny:2007xt, AbajoArrastia:2010yt, Hartman:2013qma, Liu:2013qca, Bianchi:2014bma} to condensed matter \cite{dechiara2006entanglement, fagotti2008evolution, eisler2008entanglement, lauchli2008spreading, kim2013ballistic, alba2017entanglement} and quantum information \cite{giovannetti2014ultimate,gagatsos2016entropy,de2017gaussian}. It has been recently probed experimentally in systems of ultracold bosonic atoms in optical lattices \cite{islam_ma_15, kaufman_tai_16}. In this paper, we develop methods that allow one to compute the linear and logarithmic contributions to the entanglement entropy production for the most general bosonic quadratic Hamiltonian that is time independent, which includes Floquet Hamiltonians for periodically-driven systems. In particular we prove that, for any subsystem $A$, the time evolution of the entanglement entropy $S_A(t)$ shows the characteristic behavior
\begin{align}\label{eq:1}
    S_A(t)=\Lambda_A\,t+C_A\,\ln(t)+X_A(t)\,,
\end{align}
where $\Lambda_A$ is a real number, $C_A$ is an integer, and $X_A$ is a bounded function. The leading order term $\Lambda_A\,t$ agrees with previous results on the connection between entanglement growth and Lyapunov exponents in unstable systems \cite{bianchi17}. The subleading logarithmic term $C_A\ln(t)$ is a feature that appears in quadratic Hamiltonians that are metastable. We show how to compute $\Lambda_A$ and $C_A$ directly from the Hamiltonian, and investigate cases with $\Lambda_A=0$, for which the entanglement entropy grows logarithmically.

Previous studies of entanglement production in bosonic systems have focused on quantum quenches involving stable Hamiltonians. In finite systems (not in a many-body localized phase \cite{nandkishore_huse_review_15}), a regime of linear growth of the entanglement entropy is necessarily followed by saturation \cite{dechiara2006entanglement}. In the integrable case, in which a quasi-particle picture is available, the linear growth can be understood in terms of propagation of quasi-particles \cite{Calabrese:2005in}. The saturation, on the other hand, is the result of equilibration. The systems are usually prepared in some initial state $|\psi_0\rangle$, with expectation value of the energy $E_0=\langle\psi_0|\hat{H}|\psi_0\rangle$, and then are let evolve unitarily, i.e., $|\psi_t\rangle=e^{-i\hat{H}t}|\psi_0\rangle$. For a \emph{stable} local Hamiltonian $\hat{H}$, the Gibbs state has the maximum entropy at energy $E_0$, i.e., this entropy, which is extensive, bounds from above the entanglement entropy that the system can reach when it equilibrates. Saturation to the entropy of the Gibbs state occurs only in quantum chaotic systems, with integrable ones saturating at a smaller (extensive) value \cite{dalessio_kafri_16}. Here, we study a different type of Hamiltonians, namely, unstable or metastable ones. We study quenches in which the system is prepared in a pure state $|\psi_0\rangle$ and is let evolve unitarily under such Hamiltonians. In this case, the entanglement entropy can grow without bound.

Instabilities in bosonic systems appear in a variety of forms. The simplest example is perhaps the one of the inverted harmonic oscillator. In quantum field theory in dynamical backgrounds, instabilities give rise to a wealth of non-equilibrium processes in which knowledge of the entanglement dynamics is of phenomenological relevance. In cosmological inflation, momentum modes $(\vec{k},-\vec{k})$ of the quantum fluctuations of the metric and the inflaton field become unstable when they cross the Hubble radius \cite{mukhanov2005physical,weinberg2008cosmology}. The resulting amplification of perturbations provides the quantum seeds for the temperature inhomogeneities in the cosmic microwave background, and the study of the dynamics of the entanglement entropy in this process has been proposed as a tool for investigating the cosmological quantum-to-classical transition \cite{Campo:2005sy,Polarski:1995jg,Kiefer:1999sj,Martin:2015qta}. On the other hand, momentum modes of quantum fields that are within the Hubble radius can become unstable via the phenomenon of parametric resonance \cite{traschen1990particle,kofman1994reheating,allahverdi2010reheating,amin2015nonperturbative}. This process results in a large non-thermal production of particles called pre-heating. Once the produced particles thermalize, they provide the initial conditions for the hot big-bang phase of the primordial universe. 

Similar mechanisms have been proposed for the so-called ``little bang,'' the production of a quark-gluon plasma in heavy-ion collisions \cite{yagi2005quark}. The study of entanglement production for these systems is expected to provide new insights into the mechanism of pre-thermalization, as well as new tools for estimating the relevant time scales involved in the process  \cite{muller2011entropy,kunihiro2010chaotic,Hashimoto:2016wme}. A third example of bosonic systems in which instabilities lead to a rich phenomenology is the case of ultracold atomic gases trapped in an external potential that is periodically modulated \cite{fedichev2004cosmological,carusotto2010density}. This modulation can induce a response in the Bose-Einstein condensate that leads to stimulated quasi-particle production. Studying the dynamics of the entanglement entropy in these systems is of particular relevance because of current experiments that can probe the non-separability of quasi-particle pair creation \cite{jaskula2012acoustic,Steinhauer:2015saa}.

A comprehensive discussion of the linear growth of the entanglement entropy in field theoretical systems with unstable modes can be found Ref.~\cite{bianchi17}, in which Gaussian states and linear symplectic methods were employed \cite{holevo2013quantum,weedbrook2012gaussian,adesso2014continuous,Bianchi:2015fra,Bianchi2017kahler}. Here we extend this analysis by determining the subleading logarithmic corrections to the entanglement production, and by studying numerically the dynamics of non-Gaussian states and under non-quadratic (interacting) Hamiltonian evolution. The numerical results allow us to put forward a conjecture that widens significantly the realm of relevance of our analytical results. While our methods are tailored to applications to quantum fields in dynamical backgrounds, our presentation focuses on quantum mechanical systems with a finite set of bosonic modes, which can be understood as a multi-mode generalization of the two-mode squeezing of the $(\vec{k},-\vec{k})$ sector.\footnote{This generalization is relevant for instance for wavepacket observables \cite{Bianchi2017entropy}, and for quantum fields evolving in a background that is not necessarily homogeneous and isotropic.}

While Gaussian states play a prominent role in the analytic description of a variety of physical systems, they are only an approximation. Ever present interactions result in non-Gaussian states, and, even for initial Gaussian states, weak interactions can produce strongly non-Gaussian states over long times. It is therefore important to put to a test the robustness of our analytical results for non-Gaussian states and non-quadratic Hamiltonians. This is done numerically within a two-site Bose-Hubbard model in the limit in which interactions are very weak. In the absence of interactions, this model was studied analytically in Ref.~\cite{ghosh2017entanglement} by solving the non-linear Ermakov equation. Here, we present an analytic solution that relies on linear symplectic methods, and confirms our theoretical predictions for the linear, logarithmic, and oscillatory contributions to the evolution of the entanglement entropy. For non-Gaussian initial states, our analytical results for Gaussian states (which act as upper bounds for the numerical ones) are close to being saturated.

We should add that, in recent years, entanglement production after quantum quenches in fermionic and spin systems has also drawn much attention~\cite{kim2013ballistic, khlebnikov_kruczenski_14, alba14, alba_17, alba2017entanglement}. While both integrable and quantum chaotic systems exhibit an intermediate linear growth of the entanglement entropy, a new class of systems, many-body localized systems, has emerged in which the growth of the entanglement entropy is much slower, namely, logarithmic~\cite{prosen.2008, bardarson.2012, serbyn.2013}. Many-body localized systems are striking because they are interacting systems that are robust against eigenstate thermalization~\cite{deutsch1991, srednicki1994, rigol.2008, dalessio_kafri_16}, and as such they do not thermalize when taken away from equilibrium~\cite{khatami_rigol_12a, serbyn_Papic_14, tang2015quantum, nandkishore_huse_review_15}. The logarithmic growth of the entanglement entropy after a quench is considered to be another hallmark of the many-body localized phase, which differentiates it from the (noninteracting) Anderson localized one \cite{anderson.1958} (both are generated by disorder and exhibit no dc conductivity). It is remarkable that, in the analytical and numerical studies of the bosonic systems considered here, the growth of the entanglement entropy is only found to be either linear or logarithmic, as in quantum quenches in many-body quantum systems in delocalized and many-body localized phases, respectively. 

The presentation is structured as follows. In Sec.~\ref{BosonicSystems}, we review basic definitions for bosonic systems and set the notation used in the subsequent parts. In Sec.~\ref{theoretical}, we use analytical techniques to derive the asymptotic time dependence of the entanglement entropy for time-independent quadratic Hamiltonians and Gaussian initial states. Using computational methods, in Sec.~\ref{numerical} we explore dynamics involving non-Gaussian initial states and non-quadratic Hamiltonians. In Sec.~\ref{discussion}, we make some general remarks about the mechanism for entanglement entropy production studied in this work. We discuss applications of our results, and put forward a conjecture on the entanglement entropy of non-Gaussian initial states evolving under quadratic Hamiltonians. In the appendixes, we provide relevant supplements on the Jordan normal form, the classical time evolution, and on our numerical calculations.

\section{Bosonic systems\label{BosonicSystems}}
We consider bosonic systems with $N$ degrees of freedom. We can fix a basis of bosonic creation and annihilation operators $\hat{a}^\dagger_i$ and $\hat{a}_j$ satisfying the commutation relations
\begin{align}
	[\hat{a}_i,\hat{a}_j]=[\hat{a}^\dagger_i,\hat{a}^\dagger_j]=0,\quad\text{and }\quad [\hat{a}_i,\hat{a}_j^\dagger]=\delta_{ij}\,.
\end{align}
Quantum states can be described as square integrable complex functions on $\mathbb{R}^N$ or as elements of the Fock space generated from a vacuum state $|0\rangle$ with $\hat a_i|0\rangle=0$ for all $i$. The latter approach turns out to be more useful and we define our Fock space as
\begin{align}
	\begin{split}
	&\mathcal{H}_V=\mathrm{span}\left\{|n_1,\cdots,n_N\rangle\Big|n_i\in\mathbb{N}\right\},\\
	&\text{with}\quad|n_1,\cdots,n_N\rangle=\left(\prod_{i=1}^{N}\frac{(\hat{a}^\dagger_i)^{n_i}}{\sqrt{n_i!}}\right)|0\rangle\,,
	\end{split}
\end{align}
where the states $|n_1,\cdots,n_N\rangle$ form an orthonormal basis for $\mathcal{H}_V$. Let us emphasize that this basis of states and, in particular, the Fock space vacuum $|0\rangle$, are, in general, completely independent from eigenstates and the ground state of the Hamiltonian (to be chosen later). We only use this basis to parametrize states in the Hilbert space and to decompose the system into different subsystems.

\subsection{Quadratic Hamiltonians\label{sec:tihamilt}}
In the first part of this paper, we focus on Hamiltonians that are quadratic in terms of creation and annihilation operators. The most general quadratic Hamiltonian is
\begin{align}
	\hat{H}=\frac{1}{2}\sum^N_{i,j=1}\left[\Delta_{ij}\hat{a}_i^\dagger \hat{a}_j^\dagger+\Delta^*_{ij}\hat{a}_i\hat{a}_j+\gamma_{ij}(\hat{a}_i^\dagger \hat{a}_j+ \hat{a}_j\hat{a}_i^\dagger)\right]\,,
\end{align}
where the matrices $\Delta$ and $\gamma$ satisfy
\begin{align}
	\Delta^\intercal=\Delta\,,\quad\gamma^\intercal=\gamma^*\,.
\end{align}
The expressions and computations simplify if we switch from creation and annihilation operators to the Hermitian operators
\begin{align}\label{new basis}
	\hat{q}_i=\frac{1}{\sqrt{2}}(\hat{a}_i^\dagger+\hat{a}_i)\,,\quad
	\hat{p}_i=\frac{\ii}{\sqrt{2}}(\hat{a}_i^\dagger-\hat{a}_i)\,.
\end{align}
We can choose a basis $\hat{\xi}^a\equiv(\hat{q}_1,\cdots,\hat{q}_N,\hat{p}_1,\cdots,\hat{p}_N)$ and rewrite $\hat H$ as
\begin{align}
	\hat{H}=\frac{1}{2}h_{ab}\hat{\xi}^a\hat{\xi}^b,\quad\text{with}\quad h\equiv\left[\begin{array}{cc}
	\mathrm{Re}(\Delta+\gamma) & \mathrm{Im}(\Delta+\gamma)\\[.0em]
	\mathrm{Im}(\Delta-\gamma) & \mathrm{Re}(\gamma-\Delta)
	\end{array}\right]\,.
\end{align}
Here, $h$ is an arbitrary symmetric matrix that contains exactly the same amount of information as the matrices $\Delta$ and $\gamma$ together. We use Einstein's summation convention, i.e., we sum over contracted indices. A contracted index only refers to a pair of a lower index and an upper index. In the Hamiltonian $\hat H=\frac{1}{2}h_{ab}\hat{\xi}^a\hat{\xi}^b$, the indices $a$ and $b$ are contracted. If the reader is familiar with Penrose's abstract index notation, contracted indices can also be read as contracted in Penrose's sense. All such equations are valid independent of the basis that one chooses to write them in. However, we also give explicit expressions for the relevant matrices, such as $h$, with respect to the basis of choice $\hat{\xi}^a\equiv(\hat{q}_1,\cdots,\hat{q}_N,\hat{p}_1,\cdots,\hat{p}_N)$. Here, we use the symbol ``$\equiv$'' to emphasize that the expression is only valid with respect to this basis. For instance, the commutation relations in terms of $\hat{\xi}^a$ can be written as
\begin{align}
	[\hat{\xi}^a,\hat{\xi}^b]=\ii\Omega^{ab},\quad\text{with}\quad \Omega\equiv\left(\begin{array}{cc}
	0 & \mathds{1}\\[0em]
	-\mathds{1} & 0
	\end{array}\right)\,.
\end{align}
The inverse of $\Omega^{ab}$ is given by $\omega_{ab}$, such that $\omega_{ab}\Omega^{bc}=\delta_a{}^c$ and $\Omega^{ab}\omega_{bc}=\delta^a{}_c$.

In many situations, one diagonalizes quadratic Hamiltonians by finding a Bogoliubov transformation into eigenmodes. This corresponds to a Bogoliubov transformation
\begin{align}
	\hat{c}_k=\sum^N_{k,i=1}(\alpha_{ki}\hat{a}_i+\beta_{ki}\hat{a}^\dagger_i)\,,
\end{align}
such that the Hamiltonian takes the simple form
\begin{align}
	\hat{H}=E_0+\frac{1}{2}\sum^N_{k=1}\epsilon_k\,\hat{c}_k^\dagger \hat{c}_k\,.
\end{align}
However, such a transformation does not exist in general. This fact is directly related to the spectrum and decomposition of the matrix
\begin{align}
	K^a{}_b=\Omega^{ac}h_{cb}\,,
\end{align}
called the symplectic generator of classical time evolution. Only if $h$ is positive definite, the Hamiltonian $\hat{H}=\frac{1}{2}h_{ab}\hat{\xi}^a\hat{\xi}^b$ is positive definite and can be diagonalized through a Bogoliubov transformation. In this case, $K$ is diagonalizable with purely imaginary eigenvalues.

Let us consider examples for a single degree of freedom with creation and annihilation operator $\hat a^\dagger$ and $\hat a$. We also use the operators $\hat{q}=\frac{1}{\sqrt{2}}(\hat{a}^\dagger+\hat{a})$ and $\hat{p}=\frac{i}{\sqrt{2}}(\hat{a}^\dagger-\hat{a})$.
\begin{enumerate}
	\item[(a)] \textbf{Inverted harmonic oscillator (unstable)}\\
	The first example is the quantum version of an inverted harmonic oscillator with upside-down potential $V(\hat{q})=-\hat{q}^2$:
	\begin{align}
		\begin{split}
		&\qquad\hat{H}=-\frac{1}{2}\left[(\hat{a}^\dagger)^2 +\hat{a}^2\right]=\frac{1}{2}\left(\hat{p}^2-\hat{q}^2\right),\quad\text{with}\\
		&\qquad h\equiv\left(\begin{array}{cc}
		-1 & 0\\[0em]
		0 & 1
		\end{array}\right)\,\Rightarrow\, K\equiv\left(\begin{array}{cc}
		0 & 1\\[0em]
		1 & 0
		\end{array}\right)\,.
		\end{split}
	\end{align}
	This Hamiltonian is not bounded from below and it does not have eigenstates in the Fock space. However, we can still evolve arbitrary states with it. We can think of $\hat{H}$ as the quadratic expansion of a quartic Hamiltonian with potential $V(\hat{q})=-\hat{q}^2+\epsilon\hat{q}^4$, which is bounded from below and has a regular spectrum. The eigenvalues of $K$ are given by $\pm 1$.
	\item[(b)] \textbf{Harmonic oscillator (stable)}\\
	The second example is the well-known quantum harmonic oscillator:
	\begin{align}
		\begin{split}
		&\quad \hat{H}=\frac{1}{2}\left(\hat{a}^\dagger \hat{a}+\hat{a} \hat{a}^\dagger\right)=\frac{1}{2}\left(\hat{p}^2+\hat{q}^2\right),\quad\text{with}\\
		&\quad h\equiv\left(\begin{array}{cc}
		1 & 0\\[0em]
		0 & 1
		\end{array}\right)\,\Rightarrow\, K\equiv\left(\begin{array}{cc}
		0 & 1\\[0em]
		-1 & 0
		\end{array}\right)\,.
		\end{split}
	\end{align}
	It is already written in terms of normal modes with $h$ being positive definite. The eigenvalues of $K$ are $\pm i$ and the Hamiltonian is bounded from below. Moreover, it is diagonalizable with a complete basis of eigenstates that span the Hilbert space. We refer to this Hamiltonian as stable because, classically, it corresponds to a potential with a single global minimum.
	\item[(c)] \textbf{Free Hamiltonian (metastable)}\\
	The third example is the Hamiltonian of a free particle in one dimension:
	\begin{align}
		\begin{split}
		&\hat{H}=-\frac{1}{2}\left(\hat{a}-\hat{a}^\dagger\right)^2=\frac{1}{2}\hat{p}^2,\quad\text{with}\\
		&h\equiv\left(\begin{array}{cc}
		1 & 0\\[0em]
		0 & 0
		\end{array}\right)\,\Rightarrow\, K\equiv\left(\begin{array}{cc}
		0 & 0\\[0em]
		-1 & 0
		\end{array}\right)\,.
		\end{split}
	\end{align}
	This Hamiltonian is bounded from below but its eigenvectors are plane waves, which cannot be normalized, leading to a continuous spectrum. We refer to this Hamiltonian as metastable because, classically, it corresponds to a flat potential which does not have a single global minimum.
\end{enumerate}
Note that we cannot study entanglement production in these simple examples because all three systems consist of only a single degree of freedom. We study composite systems with many degrees of freedom that show features of all three examples above, and study the entanglement for different system decompositions.

\subsection{Floquet Hamiltonian}
The quadratic Hamiltonians in Sec.~\ref{sec:tihamilt} are all time-independent, but the same methods also apply to time-periodic Hamiltonians. A quadratic Hamiltonian with time dependence given by
\begin{align}
	\hat{H}(t)=\frac{1}{2}h(t)_{ab}\hat{\xi}^a\hat{\xi}^b
\end{align}
is time periodic if $h(t)_{ab}=h(t+T)_{ab}$ for some period $T$. Such Hamiltonians describe periodically driven systems. Interestingly, this does not imply that the time-evolution is periodic nor that the entanglement entropy just oscillates. We can write the time evolution operator as the time-ordered exponential
\begin{align}
	\hat U(t)=\mathcal{T}\exp\left[-i\int^t_0dt'\hat{H}(t')\right]\,.
\end{align}
At stroboscopic times, $t=nT$ with $n\in\mathbb{N}$, we can write $\hat U(t)=U(T)^n$. We can therefore compute the stroboscopic time evolution of such systems using the time-independent Floquet Hamiltonian
\begin{align}
	\hat{H}_{\mathrm{F}}=\frac{1}{T}\ln \hat U(T)\,.
\end{align}
Provided that $\hat{H}(t)$ is quadratic, the Floquet Hamiltonian $\hat{H}_{\mathrm{F}}$ is quadratic as well. This means that our methods for time-independent Hamiltonians can still be used to study periodically driven systems.

\subsection{Subsystems}
We are interested in computing the entanglement entropy for different decompositions
\begin{align}
	\mathcal{H}_V=\mathcal{H}_{A}\otimes\mathcal{H}_{B}
\end{align}
of two subsystems $\mathcal{H}_{A}$ and $\mathcal{H}_{B}$ with $N_A$ and $N_B$ degrees of freedom, respectively (with $N=N_A+N_B$). Without loss of generality, we choose our subsystems to be the ones generated by the first (last) $N_A$ ($N_B$) creation operators. This means that
\begin{align}
	\mathcal{H}_{A}&=\mathrm{span}\left\{|n_1,\cdots,n_{N_A},0,\cdots,0\rangle \Big|n_i\in\mathbb{N}\right\}\,,\\[0em]
	\mathcal{H}_{B}&=\mathrm{span}\left\{|0,\cdots,0,n_{N_A+1},\cdots,n_{N}\rangle\Big|n_i\in\mathbb{N}\right\}\,.
\end{align}
Given an arbitrary state $|\psi\rangle\in\mathcal{H}_V$, its entanglement entropy associated subsystem $A$ is
\begin{align}
	S_A(|\psi\rangle)=-\mathrm{tr}_{\mathcal{H}_A}
	\hat\rho_A\ln \hat\rho_A\,,\quad\hat\rho_A=\mathrm{tr}_{\mathcal{H}_B}|\psi\rangle\langle\psi|\,,
\end{align}
where we first trace over the degrees of freedom in $\mathcal{H}_B$, to obtain the reduced state $\hat\rho_A$, and then compute its von Neumann entropy. Another important measure of correlations is the Renyi entropy (of order $2$)
\begin{align}
	R_A(|\psi\rangle)=-\ln\left(\mathrm{tr}_{\mathcal{H}_A}\hat\rho_A^2\right)\,.
\end{align}

It is useful to understand subsystems in terms of a subset of observables that probe the relevant portion of the Hilbert space. In our case, all observables probing $\mathcal{H}_A$ are built from the linear observables $\hat{\xi}^a_A\equiv(\hat{q}_1,\cdots,\hat{q}_{N_A},\hat{p}_1,\cdots,\hat{p}_{N_A})$. We use this to restrict the $2$-point functions $G^{ab}$ to subsystem $A$. We then write $[G]_A^{ab}$ for the restricted $2$-point function (see next section). In the above basis, this restriction can be understood as selecting the $2N_A\times 2N_A$ sub-matrix corresponding to the correlations of the linear observables $\hat{\xi}^a_A$ in the subsystem. More mathematically, we can understand a choice of subsystem as a phase space decomposition $V=A\oplus B$, where $A$ and $B$ are complementary symplectic\footnote{A symplectic sub-vector is a sub-space for which the restricted symplectic form is still non-degenerate. Equivalently, it means that we can find a basis of canonically conjugate variables.} sub-vector spaces. Equivalently, we can think of this as a splitting of linear observables into those observables that only probe subsystem $A$ and those that only probe subsystem $B$, which implies a decomposition of the dual phase space $V^*=A^*\oplus B^*$. Using this terminology, $[G]^{ab}_A$ is the restriction of the bilinear form $G^{ab}$ to the sub-vector space $A$. Similarly, we are able to restrict a linear map $J: V\to V$ to the subsystem $A$, such that $[J]_A: A\to A: a\mapsto \mathrm{pr}_A\left[J(a)\right]$ acts on a vector $a\in A$, by first applying $J$ and then projecting onto $A$.

\subsection{Gaussian states}
For our analytic calculations of entanglement entropy production under quadratic Hamiltonians, we restrict ourselves to the class of Gaussian states. Computing the entanglement entropy of an arbitrary state with respect to arbitrary system decompositions is very difficult because the bosonic Hilbert space is infinite dimensional. However, for Gaussian states we can use powerful tools enabling us to compute the entanglement entropy directly from the 2-point correlation function.

We can define a Gaussian state in terms of its $n$-point correlation function. Given an arbitrary state $|\psi\rangle$, we define its $1$-point and $2$-point functions as
\begin{align}
	z^a&=\langle\psi|\hat{\xi}^a|\psi\rangle\,,\\
	G^{ab}&=\langle\psi|(\hat{\xi}^a-z^a)(\hat{\xi}^b-z^b)+(\hat{\xi}^b-z^b)(\hat{\xi}^a-z^a)|\psi\rangle\,.
\end{align}
In the basis $\hat{\xi}^a\equiv(\hat{q}_1,\cdots,\hat{q}_N,\hat{p}_1,\cdots,\hat{p}_N)$, $z^a$ is a $2N$-dimensional vector and $G^{ab}$ is a $2N\times 2N$ matrix. A general connected $n$-point function is then defined as
\begin{align}
	C^{a_1\cdots a_n}_{|\psi\rangle}=\langle\psi|\,\mathrm{Sym}\!\left[\left(\hat{\xi}^{a_1}-z^{a_1}\right)\cdots\left(\hat{\xi}^{a_n}-z^{a_n}\right)\right]|\psi\rangle\,,
\end{align}
where $\mathrm{Sym}$ denotes full symmetrization over all indices. We refer to a state as Gaussian if $n$-point functions for odd $n$ vanish and if all connected $2n$-point functions can be computed from $G^{ab}$ via Wick's theorem, namely
\begin{align}
	C^{a_1\cdots a_{2n}}_{|\psi\rangle}=\sum_{\sigma}G^{a_{\sigma(1)}a_{\sigma(2)}}\cdots G^{a_{\sigma(2n-1)}a_{\sigma(2n)}}\,,
\end{align}
where $\sigma$ goes through all permutations satisfying $\sigma(2i)>\sigma(2i-1)$ for all $i$. One can use $G$ and $z$ to characterize a Gaussian state $|G,z\rangle$ uniquely, and can compute the entanglement entropy of Gaussian states directly from the $2N\times 2N$ matrix $G$. Note that $G^{ab}$ is a positive-definite inner product on the dual phase space $V^*$.

Given a state $|\psi\rangle$ that is not Gaussian, we can always use its $1$-point function $z^a$ and $2$-point function $G^{ab}$ to define a Gaussian state $\varrho_{|\psi\rangle}$ with
\begin{align}
z^a&=\mathrm{Tr}\left(\hat{\xi}^a\varrho_{|\psi\rangle}\right)\,,\\
\quad \frac{1}{2}(G^{ab}+i\Omega^{ab})&=\mathrm{Tr}\left((\hat{\xi}^a-z^a)(\hat{\xi}^b-z^b)\varrho_{|\psi\rangle}\right)\,,
\end{align}
by requiring that $n$-point functions for odd $n$ vanish and for even $n$ can be computed from Wick's theorem. A subtlety lies in the fact that $\varrho_{|\psi\rangle}$ is in general not a pure state, even though $|\psi\rangle$ is pure. We can call $\varrho_{|\psi\rangle}$ the Gaussian part of $|\psi\rangle$, and its entanglement entropy bounds the entanglement entropy of $|\psi\rangle$ from above \cite{bianchi17}.

\subsection{Entanglement entropy}
The entanglement entropy $S_A(|G,z\rangle)$ associated with a subsystem $A$ of the Gaussian state $|G,z\rangle$ is completely encoded in its $2$-point function $G^{ab}$. An elegant computation method is provided by the linear complex structure $J$ defined as the matrix $J^a{}_b=-G^{ac}\omega_{cb}$. A linear complex structure provides an equivalent characterization of the state $|G,z\rangle$ because its (complex) eigenspace with eigenvalue $-i$ can be interpreted as all possible linear combinations of annihilation operators that annihilate $|G,z\rangle$. Restricting $J$ to subsystem $A$ gives rise to the so called restricted complex structure $[J]_A$, which is the sub-matrix containing only entries with respect to a basis of $A\subset V$. The eigenvalues of $[J]_A$ are purely imaginary, such that $[\ii J]_A$ has real eigenvalue pairs $\pm \nu_i$ with $\nu_i$ being the same as the symplectic eigenvalues of the restricted covariance matrix $[G]_A$. The entanglement entropy of Gaussian states is usually \cite{Adesso:2007tx} computed as
\begin{align}
\begin{split}
&S_A(|G,z\rangle)=\sum^N_{i=1}S(\nu_i),\qquad\text{with}\qquad\quad\\
&S(\nu_i)=\frac{\nu_i+1}{2}\ln\frac{\nu_i+1}{2}-\frac{\nu_i-1}{2}\ln\frac{\nu_i-1}{2}\,,
\end{split}
\end{align}
which can be reformulated into a simple trace formula \cite{bianchi17} in terms of $[\ii J]_A$ given by
\begin{align}
	S_A(|G,z\rangle)=\mathrm{Tr}\left(\frac{\mathds{1}_A+[\ii J]_A}{2}\right)\ln\left|\frac{\mathds{1}_A+[\ii J]_A}{2}\right|\,.\label{eq:master}
\end{align}
Here, the trace is just over $2N_A$-dimensional matrices. For a highly entangled system, the entanglement entropy $S_A$ approaches the Renyi entropy (of order $2$) $R_A$. For a Gaussian state $|G,z\rangle$, the Renyi entropy is given by \cite{bianchi17}
\begin{align}
	R_A(|G,z\rangle)=\frac{1}{2}\ln|\det[\ii J]_A|=\ln\mathrm{Vol}(\mathcal{V}_A)\,.
\end{align}
The equality $\frac{1}{2}\ln|\det[\ii J]_A|=\ln\mathrm{Vol}(\mathcal{V}_A)$ can be derived in the following way: First, we express $J$ with respect to the basis $(q_1,\cdots,q_{N},p_1,\cdots,p_{N})$, such that the equation $\det[\ii J]_A=\det[-\ii G\omega]_A$ simplifies to $\det[G]_A$ (with respect to this basis). Second, a determinant of $[G]_A$ with respect to a basis corresponds to the volume of the parallelepiped $\mathcal{V}_A$ spanned by this basis. The volume is measured by the volume form on $A^*$ induced by the inner product $[G]_A$. This means that we can use $[G]_A$ to compute length and angles between the basis vectors $(q_1,\cdots,q_{N_A},p_1,\cdots,p_{N_A})$, and this leads to the volume $\mathrm{Vol}(\mathcal{V}_A)$ appearing in the formula.

The Renyi entropy of Gaussian states bounds the entanglement entropy from both sides through
\begin{align}
	R_A(|G,z\rangle)\leq S_A(|G,z\rangle)\leq R_A(|G,z\rangle)+(\ln{2}-1)N_A\,.\label{Renyi-inequality}
\end{align}
The left inequality applies to any state. It is a well-known relation between the Renyi entropy of order $2$ and the von Neumann-entropy. The right inequality only applies to Gaussian states and was derived in Ref.~\cite{bianchi17}. This inequality is the reason why we can study the asymptotics of $S_A$ by analyzing $R_A$ for $R_A\to\infty$. In this limit, the inequality above implies $\lim_{R_A\to\infty}S_A/R_A=1$. We use this result to compute the asymptotic time evolution of $S_A(t)$ in terms of $R_A(t)$.

For non-Gaussian states $|\psi\rangle$, we cannot use the formulas above to compute the entanglement entropy $S_A(|\psi\rangle)$ or the Renyi entropy $R_A(|\psi\rangle)$. However, we can still compute $z^a$ and $G^{ab}$ from $|\psi\rangle$ and apply the formulas above to compute the entanglement entropy of the Gaussian state $\varrho_{|\psi\rangle}$. Importantly, the entanglement entropy $S_A(|\psi\rangle)$ is bounded from above by the Gaussian entanglement entropy $S_A(\varrho_{|\psi\rangle})$, as explained in Refs.~\cite{holevo1999capacity,bianchi17}:
\begin{align}
	S_A(|\psi\rangle)\leq S_A(\varrho_{|\psi\rangle})= \mathrm{Tr}\left(\frac{\mathds{1}_A+[\ii J]_A}{2}\right)\ln\left|\frac{\mathds{1}_A+[\ii J]_A}{2}\right|.
\end{align}
This statement can be phrased as: \emph{Among all states with the same $2$-point function $G^{ab}$, the Gaussian state has the maximal entanglement entropy.} Note, however, that the Gaussian state with covariance matrix $G^{ab}$ is not necessarily a pure state.

\subsection{Time evolution}
We focus on the dynamics generated by time-independent quadratic Hamiltonians $H=\frac{1}{2}h_{ab}\xi^a\xi^b$ (the time-dependent case can be treated in a similar fashion~\cite{bianchi17}). For quadratic Hamiltonians, the classical Hamilton equations of motion are linear and take the following form:
\begin{align}
	\dot{\xi}^a=\{H,\xi^a\}=\Omega^{ab}\partial_bH=\Omega^{ab}h_{bc}\xi^b\,,
\end{align}
where a vector $\xi^a\in V$ has components $\xi^a\equiv(q_1,\cdots,q_N,p_1,\cdots,p_N)$ with respect to a canonical basis. The solution $\xi^a(t)$ of this equation is encoded in the classical Hamiltonian flow
\begin{align}
	M(t)^a{}_b=\exp(tK)^a{}_b,\quad\text{with}\quad K^a{}_b=\Omega^{ac}h_{cb}\,,
\end{align}
such that $\xi^a(t)=M(t)^a{}_b\xi^b_0$ for some initial condition $\xi_0^a$. The matrix $K$ is called the (symplectic) generator of the time evolution. We derive a general theorem for entanglement production by analyzing the properties of $M(t)$ and the generator $K$.

It is crucial to recall that, for quadratic Hamiltonians, the quantum evolution of expectation values is completely encoded in the classical evolution, commonly known as Ehrenfest's theorem. More precisely, given an arbitrary initial state $|\psi_0\rangle$, the time-dependent $n$-point function for the state $|\psi(t)\rangle=e^{-it\hat{H}}|\psi_0\rangle$ is given by
\begin{align}
	C_{|\psi(t)\rangle}^{a_1\cdots a_n}=M(t)^{a_1}{}_{b_1}\cdots M(t)^{a_t}{}_{b_n}C_{|\psi_0\rangle}^{b_1\cdots b_n}\,.\label{correlation}
\end{align}
In particular, we have the $2$-point function $G(t)^{ab}=M(t)^a{}_cM(t)^b{}_d G_0^{cd}=M(t) G_0 M^\intercal(t)$. Here, $M^\intercal(t)_b{}^a=M(t)^a{}_b$ refers to the transpose of $M(t)$, which is important if we write matrix products rather than showing the index contraction explicitly. In Eq.~(\ref{correlation}), we do not assume that the state is Gaussian.

For the special case of a Gaussian initial state $|\psi_0\rangle=|G_0,z_0\rangle$, the state remains Gaussian and its parameter $G$ and $z$ can be computed from $M(t)$. The solution of Schr\"odinger's equation is therefore given by
\begin{align}
	|\psi(t)\rangle=|M(t)G_0M^\intercal(t),M(t)z_0\rangle\,.
\end{align}
This equation allows us to rewrite the Renyi entropy associated with $|\psi(t)\rangle$ as the time-dependent quantity
\begin{align}
	R_A(t)=\ln\mathrm{Vol}[M^\intercal(t)\mathcal{V}_A]\,.
\end{align}
This is due to the fact that measuring the volume of the time-independent region $\mathcal{V}_A$ with respect to the time-dependent metric $M(t)G_0M^\intercal(t)$ is equivalent to measuring the time-dependent volume $M^\intercal(t)\mathcal{V}_A=\{ M^\intercal(t)_b{}^a\theta_a|\theta_a\in \mathcal{V}_A\subset A^*\}$ with respect to the time-independent metric $G_0$. Provided that the Renyi entropy grows asymptotically without bound, the entanglement entropy grows with the same asymptotics
\begin{align}
	S_A(t)\sim R_A(t)=\ln\mathrm{Vol}[M^\intercal(t)\mathcal{V}_A]\,,
\end{align}
which follows from the inequality in Eq.~(\ref{Renyi-inequality}).

\section{Entanglement production of quadratic Hamiltonians\label{theoretical}}
In this section, we study the evolution of the entanglement entropy for Gaussian states under quadratic time-independent Hamiltonians.

\subsection{Decomposition of quadratic Hamiltonians}
Every quadratic Hamiltonian $\hat{H}=\frac{1}{2}h_{ab}\hat{\xi}^a\hat{\xi}^b$ can be uniquely decomposed into the three parts
\begin{align}
\hat{H}=\hat{H}_\text{unstable}+\hat{H}_\text{stable}+\hat{H}_\text{metastable}\,.
\end{align}
We show that these three parts contribute to the time-evolution of the entanglement entropy $S_A(t)$ in a characteristic way, namely:
\begin{enumerate}
	\item[(a)] Unstable Hamiltonian\\
	$\Rightarrow$ $S_A(t)$ entropy grows linearly: $S_A(t)\sim \Lambda_A t$
	\item[(b)] Stable Hamiltonian\\
	$\Rightarrow$ $S_A(t)$ oscillates: $S_A(t)\sim X_A(t)$
	\item[(c)] Metastable Hamiltonian\\
	$\Rightarrow$ $S_A(t)$ grows logarithmically: $S_A(t)\sim C_A\ln{(t)}$
\end{enumerate}
The decomposition is best understood by looking at the matrix $K^a{}_b=\Omega^{ac}h_{cb}$. This is a real square matrix and, as such, can always be decomposed into the following three commuting parts
\begin{align}
K=K_{\text{real}}+K_{\text{imaginary}}+K_{\text{nilpotent}}\,,
\end{align}
such that $K_{\text{real}}$ is a diagonalizable matrix with real eigenvalues, $K_{\text{imaginary}}$ is a diagonalizable matrix with imaginary eigenvalues, and $K_{\text{nilpotent}}$ is a nilpotent matrix. This decomposition is the well-known Jordan decomposition of real matrices, which we review in Appendix~\ref{jordan}. In order to find the decomposition of $\hat{H}$, we contract the different parts of $K$ with $\omega_{ab}$:
\begin{align}
\hat{H}_{\text{unstable}}&=\frac{1}{2}(h_\text{unstable})_{ab}\hat{\xi}^a\hat{\xi}^b\,,\\ 
\hat{H}_{\text{stable}}&=\frac{1}{2}(h_\text{stable})_{ab}\hat{\xi}^a\hat{\xi}^b\,,\\ 
\hat{H}_{\text{metastable}}&=\frac{1}{2}(h_\text{metastable})_{ab}\hat{\xi}^a\hat{\xi}^b\,,
\end{align}
where $(h_\text{unstable})_{ab}=\omega_{ac}(K_\text{real})^c{}_b$, $(h_\text{stable})_{ab}=\omega_{ac}(K_\text{imaginary})^c{}_b$, and $(h_\text{metastable})_{ab}=\omega_{ac}(K_\text{nilpotent})^c{}_b$. The condition for two quadratic Hamiltonians to commute is given by
\begin{align}
[\hat{H}_1,\hat{H}_2]=\frac{1}{2}\underbrace{\left[(h_1)_{ab}\Omega^{bc}(h_2)_{cd}-(h_2)_{ab}\Omega^{bc}(h_1)_{cd}\right]}_{=\omega_{ab}[K_1,K_2]^b{}_d}\hat{\xi}^a\hat{\xi}^d=0\,.
\end{align}
This condition is equivalent to $[K_1,K_2]=K_1K_2-K_2K_1=0$, namely, requiring that the corresponding matrices $(K_i)^a{}_b=\Omega^{ac}(h_i)_{cb}$ commute. We can conclude that the decomposition of a Hamiltonian into the three aforementioned parts induces an equivalent decomposition of the time evolution operator into the three commuting parts
\begin{align}
\hat U(t)=\ee^{-\ii\hat{H}_\mathrm{unstable}t}\ee^{-\ii\hat{H}_\mathrm{stable}t}\ee^{-\ii\hat{H}_\mathrm{metastable}t}\,,
\end{align}
where each part contributes to the time dependence of the entanglement entropy.

The asymptotics of the entanglement entropy is closely related to how the classical Hamiltonian flow $M(t)=\ee^{Kt}$ deforms regions of the classical phase space [see Appendix~\ref{classical} for a derivation of how $M(t)$ corresponds to the classical flow solving the classical Hamiltonian equations of motion].

We are mostly interested in the dual flow $M^\intercal(t)=\ee^{tK^\intercal}$ on the dual phase space that describes the time evolution of classical observables. This flow can be best understood by studying its action on a linear observable $\theta\in V^*$ in the dual phase space. To do this, we first bring $K^\intercal$ into its Jordan normal form
\begin{align}
\begin{split}
&K^\intercal\equiv\left(\begin{array}{cccc}
\boxed{J(\kappa_1)} & & &\\[0em]
& \boxed{J(\kappa_2)} & & \\[0em]
& &\ddots &\\[0em]
& & & \boxed{J(\kappa_n)}
\end{array}\right)\,,\\
&\quad\text{with}\quad J(\kappa)\equiv\left(\begin{array}{ccc}
\boxed{J_1(\kappa)} & &\\[0em]
&\ddots &\\[0em]
& & \boxed{J_{j_\kappa}(\kappa)}
\end{array}\right)\,,
\end{split}
\end{align}
by finding a basis consisting of a complete set of generalized eigenvectors for every Jordan block $J_k(\kappa)$ associated with the eigenvalue $\kappa$:
\begin{itemize}
	\item \textbf{Real eigenvalue $\kappa=\lambda$:}\\
		For every Jordan block $J_k(\kappa)$ of the real eigenvalue $\kappa$, we have $\dim J_k(\kappa)$ distinct generalized eigenvectors $\eig^l_k(\kappa)$. Here, $\kappa$ runs over all real eigenvalues of $K$, $k$ runs over the number of Jordan blocks associated with $\kappa$, and $l$ runs up to the dimension $\dim J_k(\kappa)$. The action of $M^\intercal(t)$ on $\eig^l_k(\kappa)$ is given by
		\begin{align}
			M^\intercal(t)\,\eig^l_k(\kappa)=\ee^{\lambda t}\sum^{l}_{l'=1}\frac{t^{l-l'}}{(l-l')!}\eig^{l'}_k(\kappa)\,.
		\end{align}
		Its length behaves asymptotically as $\ln\lVert M^\intercal(t)\eig^{l'\pm}_k(\kappa)\rVert\sim\lambda t+ (l-1)\ln(t)$ as $t\to\infty$.
	\item \textbf{Complex eigenvalue $\kappa=\lambda+\ii \omega$}\\
	For every Jordan block $J_k(\kappa)$ of the complex eigenvalue $\kappa$ with $\omega>0$, all generalized eigenvectors $\eig^{l\pm}_k(\kappa)$ come in pairs. Therefore, we have the additional label $\pm$ to distinguish the two vectors per pair, besides the labels $\kappa$, $k$, and $l$. The action of $M^\intercal(t)$ on $\eig^l_k(\kappa)$ is given by
	\begin{eqnarray}
	&&\qquad\ M^\intercal(t)\,\eig^{l+}_k(\kappa)=\\&&\qquad\ \ee^{\lambda t}\sum^{l}_{l'=1}\frac{t^{l-l'}}{(l-l')!}\left[\cos(\omega t)\eig^{l'+}_k(\kappa)+\sin(\omega t)\eig^{l'-}_k(\kappa)\right],\nonumber\\
	&&\qquad\ M^\intercal(t)\,\eig^{l-}_k(\kappa)=\\&&\qquad\ \ee^{\lambda t}\sum^{l}_{l'=1}\frac{t^{l-l'}}{(l-l')!}\left[\cos(\omega t)\eig^{l'-}_k(\kappa)-\sin(\omega t)\eig^{l'+}_k(\kappa)\right],\nonumber
	\end{eqnarray}
	Its length behaves asymptotically as $\ln\lVert M^\intercal(t)\eig^{l'\pm}_k(\kappa)\rVert\sim\lambda t+ (l-1)\ln(t)$ as $t\to\infty$, which is the same as in the real case.
\end{itemize}
The decomposition discussed here is closely related to the normal forms of quadratic Hamiltonians discussed in Ref.~\cite{arnold1990symplectic}.

\subsection{Asymptotic volume growth}
In order to study the asymptotic behavior of the entanglement entropy, we need to understand how the volume of the parallelepipeds $\mathcal{V}$ grows asymptotically under the action of the Hamiltonian flow $M^\intercal(t)$. In order to answer this question, it is helpful to first consider $L$ vectors $\Phi^j$ selected out of the generalized eigenvectors, namely
\begin{align}
\Phi^j=\eig^{l_j\sigma_j}_{k_j}(\kappa_{i_j})\,,
\end{align}
where $\sigma_j$ vanishes for real $\kappa_j$, but can take values $\sigma_j\in\{+,-\}$ for complex $\kappa_j$. Given such a set, we can define the $L$-dimensional parallelepiped spanned by them as the set
\begin{align}
	\mathcal{V}=\left\{\sum^{L}_{j=1}c_j\Phi^j\Big|0\leq c_j\leq 1\right\}\,,
\end{align}
which is a subset of the hyperplane $\mathrm{span}(\Phi^1,\cdots,\Phi^L)\subset V^*$. We define the time evolved parallelepiped as
\begin{align}
	M^\intercal(t)\mathcal{V}=\left\{\sum^{L}_{j=1}c_j\,M^\intercal(t)\theta^j\Big|0\leq c_j\leq 1\right\}\,,
\end{align}
where each spanning vector $\Phi^j$ evolves to $M^\intercal(t)\Phi^j$. The following theorem provides a precise answer for how the volume grows asymptotically.

\begin{theorem}[Volume asymptotics\label{volume}]
The asymptotic behavior of the volume $M^\intercal(t)\mathcal{V}$ is
\begin{align}
&\ln\mathrm{Vol}[M^\intercal(t)\mathcal{V}]\sim \Lambda\,t+C\,\ln(t)\,,\quad\text{with}\\
&\Lambda=\sum^L_{j=1}\lambda_{i_j}\quad{\rm and}\quad C=\sum^L_{j=1}l_j-\sum_{n^{\sigma}_k(\kappa)}\frac{n^{\sigma}_k(\kappa)[n^{\sigma}_k(\kappa)+1]}{2}\,,\nonumber
\end{align}
where $n^{\sigma}_k(\kappa)$ is the number of vectors $\theta^j$ that were selected out of the $m_k(\kappa)$ vectors
\begin{align}
	\eig^{1\sigma}_k(\kappa)\,,\quad\eig^{2\sigma}_k(\kappa)\,,\quad\cdots\quad\eig^{j_k(\kappa)\sigma}_k(\kappa)\,.
\end{align}
For real $\kappa$, $\sigma$ vanishes, and for complex $\kappa$, we have $\sigma\in \{+,-\}$. Note that this asymptotic behavior is universal, that is, it is independent of the specific (time-independent) metric or volume form used to compute it.
\end{theorem}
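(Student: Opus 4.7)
The plan is to compute $\mathrm{Vol}[M^\intercal(t)\mathcal{V}]$ directly as the norm of the wedge product of the time-evolved spanning vectors. Fix any time-independent inner product on $V^*$ and equip $\Lambda^L V^*$ with the induced norm, so that
\begin{align*}
\mathrm{Vol}[M^\intercal(t)\mathcal{V}]=\bigl\|M^\intercal(t)\Phi^1\wedge\cdots\wedge M^\intercal(t)\Phi^L\bigr\|.
\end{align*}
Universality with respect to the chosen inner product is automatic: two such inner products restrict to the finite-dimensional subspace $\mathrm{span}(\Phi^1,\dots,\Phi^L)$ and induce volume forms that differ there by a constant positive factor, contributing only $O(1)$ to $\ln\mathrm{Vol}$.

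Next, I would substitute the explicit evolution formulas from the preceding subsection. Each $M^\intercal(t)\Phi^j$ equals $\ee^{\lambda_{i_j}t}$ times a polynomial in $t$ with coefficients in the Jordan chain containing $\Phi^j$ (multiplied, when $\kappa_{i_j}$ is complex, by bounded trigonometric factors $\cos(\omega_{i_j}t),\sin(\omega_{i_j}t)$). Pulling the exponentials outside the wedge produces the overall prefactor $\ee^{\Lambda\, t}$ with $\Lambda=\sum_j\lambda_{i_j}$, which accounts for the linear-in-$t$ contribution to $\ln\mathrm{Vol}$. Because generalized eigenvectors from distinct Jordan chains are linearly independent, the remaining polynomial wedge factorizes as a product of chain-by-chain contributions. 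Within a single chain $(\kappa,k,\sigma)$ containing $n\equiv n^\sigma_k(\kappa)$ selected vectors with sorted indices $l_1<\cdots<l_n$, I expand $M^\intercal(t)\Phi^{l_j}=\sum_{r=1}^{l_j}\tfrac{t^{l_j-r}}{(l_j-r)!}\eig^{r\sigma}_k(\kappa)$ (suppressing the rotation in the complex case, handled below) and apply multilinearity. A nonzero term in the $n$-fold wedge requires pairwise distinct indices $r_j\le l_j$, contributing the power $t^{\sum_j(l_j-r_j)}$; maximizing this power means minimizing $\sum r_j$ over such injective assignments, which is solved by $r_j=j$ (admissible because $l_j\ge j$ after sorting) and yields the chain-wise leading degree $\sum_j l_j-n(n+1)/2$. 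Summing over all chains reproduces the stated formula for $C$.

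The main obstacle is to verify that this leading-order coefficient is nonzero and to control the bounded oscillations introduced by complex eigenvalues. The leading coefficient is a linear combination of the wedge $\eig^{1\sigma}_k(\kappa)\wedge\cdots\wedge\eig^{n\sigma}_k(\kappa)$ summed over valid permutations $\pi$ with weight $\mathrm{sgn}(\pi)\prod_j\tfrac{1}{(l_j-\pi(j))!}$, which is a Jacobi--Trudi-type determinant that is nonzero for distinct $l_j$; combined with linear independence of the generalized eigenvectors along a Jordan chain, this guarantees a nonvanishing leading order. For the complex case, the trigonometric prefactors enter as a rigid $SO(2)$ rotation by $\omega t$ acting on each level pair $(\eig^{l+}_k(\kappa),\eig^{l-}_k(\kappa))$, and invoking $\cos^2(\omega t)+\sin^2(\omega t)=1$ shows that the oscillations enter only as bounded multiplicative factors in the norm of the wedge, contributing $O(1)$ to its logarithm. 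Taking logarithms then gives $\ln\mathrm{Vol}[M^\intercal(t)\mathcal{V}]=\Lambda\,t+C\,\ln(t)+O(1)$, as required.
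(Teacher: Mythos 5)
Your proof is correct and reaches the same counting as the paper, but it executes the argument more formally. The paper's proof is geometric: it starts from the naive sum $\sum_j[\lambda_{i_j}t+(l_j-1)\ln t]$, observes that vectors selected from the same Jordan chain all grow dominantly toward the same direction, and corrects the $j$-th vector's contribution from $r_j-1$ to $r_j-j$, yielding the $-n^\sigma_k(\kappa)[n^\sigma_k(\kappa)+1]/2$ term; the independence of distinct chains and the boundedness of the rotations are then addressed as verbal remarks. You instead write the volume as $\lVert M^\intercal(t)\Phi^1\wedge\cdots\wedge M^\intercal(t)\Phi^L\rVert$, extract the leading power of $t$ by minimizing $\sum_j r_j$ over injective assignments $r_j\le l_j$ (which gives the same $\sum_j l_j-n(n+1)/2$ per chain), and -- crucially -- verify that the leading coefficient does not vanish via the determinant $\det[1/(l_j-i)!]$, which reduces to a Vandermonde in the distinct indices $l_j$. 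That nonvanishing check is a genuine gap in the paper's argument that your version closes; likewise your treatment of universality (equivalence of norms on $\Lambda^L V^*$) and of the complex case (factoring out the block-diagonal $SO(2)$ rotation as an isometry before the shear analysis) makes rigorous what the paper only asserts. The trade-off is that the paper's version conveys the mechanism (collapse of volume along shared dominant directions) more transparently, while yours is the one you could hand to a referee.
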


\begin{proof}
We know that the length $\lVert M^\intercal(t)\eig^{l_j\sigma_{j}}_{k_j}(\kappa_{i_j})\rVert$ of each vector grows asymptotically as
\begin{align}
\ln\lVert M^\intercal(t)\eig^{l_j\sigma_{j}}_{k_j}(\kappa_{i_j})\rVert\sim \lambda_{i_j}t+(l_j-1)\ln(t)\,.
\end{align}
From this, we can make a first guess that the asymptotic volume growth should be given by
\begin{align}
\sum^L_{j=1}\left[\lambda_{i_j}t+(l_j-1)\ln(t)\right]\,.
\end{align}
However, one can convince oneself that the second term cannot be correct if two or more vectors come from the same sequence consisting of the $m_k(\kappa)$ vectors
\begin{align}
\eig^{1\sigma}_k(\kappa)\,,\quad\eig^{2\sigma}_k(\kappa)\,,\quad\cdots\,,\quad\eig^{j_k(\kappa)\sigma}_k(\kappa)\,.
\end{align}
Let $n^\sigma_k(\kappa)$ be the number of vectors in this sequence that we selected as part of the $L$ vectors $\Phi^j$. Let us refer to these vectors as
\begin{align}
	\eig^{r_j\sigma}_k(\kappa)
\end{align}
with $1\leq r_j\leq n^\sigma_k(\kappa)$ and $r_j<r_{j+1}$. When we evolve each of these vectors, their dominating growth points into the same direction, namely
\begin{align}
	\eig^{r_j\sigma}_k(\kappa)\sim t^{r_j-1}\,M^\intercal(t)\eig^{1\sigma}_k(\kappa)\,.
\end{align}
Even though each vector grows with the asymptotic $\ee^{\lambda t} t^{r_j-1}$, the volume spanned by these vectors cannot be the product of the length growth because all $n^\sigma_k(\kappa)$ grow dominantly in the same direction, the direction that $M^\intercal(t)\eig^{1\sigma}_k(\kappa)$ is evolving. In order to find the volume growth, we need to consider the first $n_k^\sigma(\kappa)$ linearly independent directions that these vectors are dominantly growing into. The dominant directions will therefore be the first $n^\sigma_k(\kappa)$ vectors of the above sequence, namely
\begin{align}
\eig^{1\sigma}_k(\kappa)\,,\quad\eig^{2\sigma}_k(\kappa)\,,\quad\cdots\quad\eig^{n^\sigma_k(\kappa)\sigma}_k(\kappa)\,.
\end{align}
This means the $j$-th vector of our sequence only contributes $(r_j-j)$ rather than $r_j-1$ as power of $t$ to the overall volume growth. If we sum the exponents for all $j$, we find
\begin{align}
	\sum^{n_k^\sigma(\kappa)}_{j=1}(r_j-j)=\sum^{n_k^\sigma(\kappa)}_{j=1}r_j-\frac{n_k^\sigma(\kappa)[n_k^\sigma(\kappa)+1]}{2}\,.
\end{align}
This is the contribution for the vectors $\theta^j$ belonging to a specific sequence. If we sum over all contributions from vectors $\theta^j$ belonging to all possible sequences, we find
\begin{align}
&\ln\mathrm{Vol}[M^\intercal(t)\mathcal{V}]\sim \Lambda\,t+C\,\ln(t),\quad\text{with}\\
&\Lambda=\sum^L_{j=1}\lambda_{i_j}\quad{\rm and}\quad C=\sum^L_{j=1}l_j-\sum_{n^{\sigma}_k(\kappa)}\frac{n^{\sigma}_k(\kappa)[n^{\sigma}_k(\kappa)+1]}{2},\nonumber
\end{align}
as expected. Let us point out two subtleties that are crucial for the argument:
\begin{itemize}
	\item For complex eigenvalues, the time evolution does not just stretch vectors $\eig^{l\sigma}_k(\kappa)$, but also rotates them. However, this does not change their asymptotic growth, just an overall prefactor that may depend on time as the vector rotates in the subspace spanned by $\eig^{l\sigma}_k(\kappa)$ for $1\leq l\leq \dim J_k(\kappa)/2$ and $\sigma\in\{+,-\}$. This prefactor is always bounded because the rotation occurs in a closed orbit with frequency $\omega=\mathrm{Im}(\kappa)$. Finally, we do not need to worry about the fact that vectors in the $\sigma\!=\!+$ sequence become linear dependent on vectors in the $\sigma\!=\!-$ sequence. Even though they rotate in the same subspace, they always stay linearly independent and do not approach the same direction due to a phase difference of $\pi/2$.
	\item A similar argument can be used to explain why we can consider sequences associated with different Jordan blocks $J_k(\kappa)$ independently and do not need to worry about different vectors approaching the same direction. The time evolution of vectors $\eig^{l\sigma}_k(\kappa)$ with fixed $k$ and $\kappa$ always stay in the subspace spanned by them and never approach directions of vectors with different $k'$ or $\kappa'$.
\end{itemize}
This concludes the proof.
\end{proof}
We found the precise volume asymptotics for a given choice of $L$ vectors $\Phi^j$. The next question is for which choice of $L$ vectors $\Phi^j$ the volume grows most rapidly. This is answered by the next theorem.
\begin{theorem}[Maximal volume growth\label{maxvolum}]
The following algorithm allows us to find $L$ vectors $\Phi^j\in V^*$, such that the corresponding parallelepiped $\mathcal{V}\subset V^*$ grows most rapidly among all $L$-dimensional parallelepipeds, and its asymptotics is given by
\begin{align}
	\ln M^\intercal(t)\mathcal{V}\sim \Lambda^L_{\max}t+C^L_{\max}\ln{(t)}\,,
\end{align}
where $\Lambda_{\max}$ and $C_{\max}$ are computed below.
\begin{enumerate}
	\item We define the exponential and polynomial contribution of a vector $\eig^{l\sigma}_k(\kappa)$ as
	\begin{align}
		&\mathbb{E}\left(\eig^{l\sigma}_k(\kappa)\right)=\mathrm{Re}(\kappa)\quad\text{and}\\
		&\mathbb{P}\left(\eig^{l\sigma}_k(\kappa)\right)=\left\{\begin{array}{lcl}
		2l-1-\dim J_k(\kappa),& &\mathrm{Im}(\kappa)=0\\[0em]
		2l-1-\dim J_k(\kappa)/2, & &\mathrm{Im}(\kappa)\neq0
		\end{array}\right.\,.\nonumber
	\end{align}
	\item We sort all $2N$ generalized eigenvectors $\eig^{l\sigma}_k(\kappa)$ into a long list
	\begin{align}
		\left(\Phi^1,\Phi^2,\cdots,\Phi^{2N}\right)\,,
	\end{align}
	such that $\mathbb{E}(\Phi^j)\geq\mathbb{E}(\Phi^{j+1})$ is always satisfied and such that $\mathbb{P}(\Phi^j)\geq\mathbb{P}(\Phi^{j+1})$ is satisfied whenever $\mathbb{E}(\Phi^j)=\mathbb{E}(\Phi^{j+1})$. This sorting may not be unique, but it is sufficient for finding the maximal volume growth.
	\item A maximally growing parallelepiped is spanned by the first $L$ vectors $\Phi^j$ and its asymptotics is given by
	\begin{align}
	\qquad \Lambda^L_{\max}=\sum^L_{j=1}\mathbb{E}(\Phi^j)\quad\text{and}\quad C^L_{\max}=\sum^L_{j=1}\mathbb{P}(\Phi^j)\,.
	\end{align}
	Note that we maximize the asymptotics and not $\Lambda$ and $C$ individually.
\end{enumerate}
\end{theorem}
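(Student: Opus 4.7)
The plan is to reduce the optimization from arbitrary $L$-dimensional parallelepipeds to parallelepipeds spanned by $L$-tuples of generalized eigenvectors, and then to prove that the sort-and-select algorithm is optimal among such tuples via a concavity argument. Because $\Lambda t$ dominates $C\ln(t)$ as $t\to\infty$, maximizing the asymptotic $\Lambda t+C\ln(t)$ amounts to maximizing $\Lambda$ lexicographically before $C$, which matches the two-level $(\mathbb{E},\mathbb{P})$ ordering used by the algorithm.

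To reduce to generalized eigenvectors, I would expand each $\Phi^j$ in the real basis $\{\eig^{l\sigma}_k(\kappa)\}$ and form the $L$-fold wedge $\omega=\Phi^1\wedge\cdots\wedge\Phi^L$, which decomposes as $\omega=\sum_I c_I\,\Psi_I$ over strictly increasing multi-indices $I$ of length $L$, with $\Psi_I$ a wedge of $L$ generalized eigenvectors. Pulling $M^\intercal(t)$ through the sum and applying the triangle inequality on the wedge norm yields $\mathrm{Vol}[M^\intercal(t)\mathcal{V}]\leq \sum_I|c_I|\,\|M^\intercal(t)\Psi_I\|$. Theorem~\ref{volume} then supplies $\ln\|M^\intercal(t)\Psi_I\|\sim\Lambda_I t+C_I\ln(t)$ for every summand, and taking logarithms gives $\ln\mathrm{Vol}[M^\intercal(t)\mathcal{V}]\leq \max_I(\Lambda_I t+C_I\ln(t))+O(1)$. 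This upper bound is realized by taking $\mathcal{V}$ to be the optimal wedge $\Psi_{I^*}$ itself (coefficient equal to $1$, all others zero), so the problem reduces to maximizing $(\Lambda_I,C_I)$ lexicographically over $L$-subsets of generalized eigenvectors.

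For the reduced discrete problem, the primary sort on $\mathbb{E}$ trivially maximizes $\Lambda=\sum_j\mathrm{Re}(\kappa_{i_j})$. To maximize $C$ I would first show that, for a single sequence $(k,\kappa,\sigma)$ from which $n=n^\sigma_k(\kappa)$ vectors are drawn, the contribution $\sum_j l_j-n(n+1)/2$ from Theorem~\ref{volume} is maximized by choosing the top $n$ positions $l=j_k(\kappa)-n+1,\ldots,j_k(\kappa)$, producing $n(j_k(\kappa)-n)$. A direct arithmetic check confirms $\sum_{l=j_k-n+1}^{j_k}\mathbb{P}(\eig^{l\sigma}_k(\kappa))=n(j_k-n)$ in both the real and complex cases, so within each sequence the greedy rule precisely reproduces $\sum_j\mathbb{P}(\Phi^j)$. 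Across sequences, the marginal gain of incrementing $n$ in sequence $(k,\kappa,\sigma)$ is $j_k(\kappa)-2n-1$, which equals the $\mathbb{P}$-value of the next-to-be-chosen vector in that sequence. Since $n\mapsto n(j_k-n)$ is concave, always taking the largest remaining marginal gain is globally optimal — and this is exactly what the algorithm does by sorting on $\mathbb{P}$ within each $\mathbb{E}$-class.

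The main obstacle is making the reduction step rigorous: one must argue that the asymptotic of the wedge sum is controlled by its fastest-growing term. The upper bound via the triangle inequality is routine, and achievability is automatic because the optimal parallelepiped is a single basis wedge, so no cancellation can occur. A minor technicality is the bounded oscillatory prefactors induced by complex eigenvalues of $K$; these remain $O(1)$ and hence do not enter $\Lambda$ or $C$, but one must invoke the $\pi/2$ phase shift between the $\sigma=\pm$ sectors emphasized in the proof of Theorem~\ref{volume} to rule out accidental vanishing of the leading coefficient in the chosen wedge.
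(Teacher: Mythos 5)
Your proposal is correct, and its combinatorial core is the same as the paper's: the within-sequence accounting showing that drawing the top $n$ vectors of a sequence of length $j_k(\kappa)$ yields the polynomial exponent $n\bigl(j_k(\kappa)-n\bigr)=\sum_j\mathbb{P}$, and the observation that the marginal gain $j_k(\kappa)-2n-1$ equals the $\mathbb{P}$-value of the next vector, reproduce the paper's Step~1; your concavity/greedy-exchange argument makes explicit what the paper's Step~2 simply asserts (``it is clear that\dots''). Where you genuinely depart from the paper is the reduction from arbitrary $L$-dimensional parallelepipeds to those spanned by generalized eigenvectors. The paper's proof of this theorem silently restricts to eigenvector-spanned parallelepipeds, and the general case is only treated later, in the proof of Theorem~\ref{th:general}, by row-reducing the transformation matrix $T$ --- an argument that covers only generic configurations (invertible leading block). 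Your wedge-product decomposition $\Phi^1\wedge\cdots\wedge\Phi^L=\sum_I c_I\,\Psi_I$ with the triangle inequality gives the uniform upper bound $\ln\mathrm{Vol}[M^\intercal(t)\mathcal{V}]\leq\max_{I:\,c_I\neq 0}\bigl(\Lambda_I t+C_I\ln t\bigr)+O(1)$ for \emph{every} parallelepiped, and achievability by the single basis wedge $\Psi_{I^*}$ (whose asymptotics Theorem~\ref{volume} supplies directly, including the bounded oscillatory prefactors you flag) closes the argument. This buys a complete justification of the ``most rapidly among all $L$-dimensional parallelepipeds'' claim in the theorem statement, which the paper's own proof does not actually establish; the price is only the routine bookkeeping on the exterior power.
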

\begin{proof}
The proof goes in two steps. First, we ask what specific vector $\eig^l_k(\kappa)$ contributes to the asymptotics of the volume, and second, how do we need to sort them in order to get maximal contributions.\\
\textbf{Step 1}: Every vector $\eig^{l\omega}_k(\kappa)$ may contribute to both the exponential asymptotics $\Lambda$ and the polynomial asymptotics $C$. From Theorem \ref{volume}, we recall that a specific vector $\eig^{l\sigma}_k(\kappa)$ contributes
\begin{align}
	\mathbb{E}(\eig^{l\sigma}_k(\kappa))=\lambda=\mathrm{Re}(\kappa)
\end{align}
to the exponential asymptotics. For the contribution to the polynomial asymptotics, we need to know how many vectors of the same sequence are already contributing. If there are already $s$ vectors in the sequence, the vector $\eig^{l\sigma}_k(\kappa)$ contribute exactly 
\begin{align}
	\mathbb{P}(\eig^{l\sigma}_k(\kappa))=(l-1)-s
\end{align}
to the polynomial exponent. However, if there are $m_k(\kappa)$ vectors in the sequence, we would only choose the vector $\eig^l_k(\kappa)$ for our parallelepiped if we have already chosen the $s=m_k(\kappa)-l$ vectors $\eig^{l+1}_k(\kappa),\cdots,\eig^{m_k(\kappa)}_k(\kappa)$. Here, we have $m_k(\kappa)=\dim J_k(\kappa)$ for real eigenvalue $\kappa$ and $m_k(\kappa)=\dim J_k(\kappa)/2$ for complex eigenvalue $\kappa$. In total, this leads to a contribution of
\begin{align}
	\mathbb{P}\left(\eig^{l\sigma}_k(\kappa)\right)=\left\{\begin{array}{lcl}
	2l-1-\dim J_k(\kappa),& &\mathrm{Im}(\kappa)=0\\[0em]
	2l-1-\dim J_k(\kappa)/2, & &\mathrm{Im}(\kappa)\neq0
	\end{array}\right.\,.
\end{align}
\textbf{Step 2:} It is clear that when we choose vectors of a parallelepiped in order, we maximize the asymptotics of its volume if we choose vectors first based on their exponential contribution and only second based on their polynomial contribution. Moreover, if two vectors have identical exponential and polynomial contribution, it does not matter which one we choose.
\end{proof}

\subsection{Asymptotic entanglement production}
When studying the time evolution of the entanglement entropy, the following volume formula for Gaussian states is of much help:
\begin{align}
	S_A(t)\sim \ln\mathrm{Vol}\!\left[M^\intercal(t)\mathcal{V}_A\right]\,,
\end{align}
where $\mathcal{V}_A\subset A$ is an arbitrary $2N_A$-dimensional parallelepiped in subsystem $A$. This formula was derived as a central result in Ref.~\cite{bianchi17}. The most important feature is that this formula is independent of the initial state and is also independent from the metric that we use to measure the $2N_A$ dimensional volume of $M^\intercal(t)\mathcal{V}_A\subset A^*$.

When we select a subsystem, we choose a subset of $N_A$ out of $N$ pairs $(\hat a_i^\dagger,\hat a_i)$ of creation and annihilation operators. Mathematically, this corresponds to choosing a $2N_A$ dimensional subspace $A\subset V$ of the classical phase space $V$ that induces a tensor product decomposition $\mathcal{H}_N=\mathcal{H}_A\otimes\mathcal{H}_B$. For the volume formula, we only need to select a parallelepiped $\mathcal{V}_A\subset A$ by choosing $2N_A$ basis vectors $\theta^i$ with $\mathrm{span}(\theta^1,\cdots,\theta^{2N_A})=A$. Note that these $2N_A$ vectors do not, in general, coincide with the generalized eigenvectors $\Phi^j$. However, the following theorem shows that, essentially, all generic subsystems exhibit the same asymptotics of the entanglement entropy, which coincides with the maximal volume growth of a $2N_A$-dimensional parallelepiped.
\begin{theorem}[Generic entanglement production]\label{th:general}
Given a quadratic time-independent Hamiltonian $\hat{H}=\frac{1}{2}h_{ab}\hat{\xi}^a\hat{\xi}^b$ and a Gaussian initial state, the entanglement entropy for a generic subsystem $A\subset V$ with $N_A$ degrees of freedom grows asymptotically as
\begin{align}
	S_A(t)\sim \Lambda^{2N_A}_{\max}\,t+C^{2N_A}_{\max}\,\ln{(t)}\,,
\end{align}
where $\Lambda^{2N_A}_{\max}$ and $C^{2N_A}_{\max}$ are the same as in Theorem~\ref{maxvolum} on the maximal volume growth.
\end{theorem}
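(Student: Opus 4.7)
The plan is to use the volume formula $S_A(t)\sim \ln\mathrm{Vol}[M^\intercal(t)\mathcal{V}_A]$ together with Theorem~\ref{maxvolum}, and show that a generic $2N_A$-dimensional subspace $A^*\subset V^*$ inherits the maximal asymptotic volume growth through a Cauchy--Binet / wedge-product expansion.

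First I would fix the basis $\{\Phi^1,\dots,\Phi^{2N}\}$ of $V^*$ consisting of the generalized eigenvectors of $K^\intercal$, sorted lexicographically in $(\mathbb{E},\mathbb{P})$ as in Theorem~\ref{maxvolum}. For any basis $\{\theta^1,\dots,\theta^{2N_A}\}$ of $A^*$ write $\theta^i=\sum_{j}c^i{}_j\,\Phi^j$; then the entanglement and Renyi asymptotics are encoded in the norm of the $2N_A$-form
\begin{align*}
\omega(t)=M^\intercal(t)\theta^1\wedge\cdots\wedge M^\intercal(t)\theta^{2N_A}\in\wedge^{2N_A}V^*.
\end{align*}

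Second, I would expand $\omega(t)$ in the basis $\Phi^I:=\Phi^{i_1}\wedge\cdots\wedge \Phi^{i_{2N_A}}$ with $i_1<\cdots<i_{2N_A}$. Because $M^\intercal(t)$ preserves the subspace spanned by each Jordan block, the multilinear expansion yields $\omega(t)=\sum_I \mu_I(t)\,\Phi^I$, where each $\mu_I(t)$ is a sum over $2N_A$-subsets $J$ of products $(\det c|_J)\,p_{JI}(t)$, and each $p_{JI}(t)$ is an exponential times polynomial in $t$ (times bounded $\sin/\cos$ factors from complex eigenvalues) whose asymptotics is read off from Theorem~\ref{volume}. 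By the greedy algorithm in Theorem~\ref{maxvolum}, the dominant coefficient corresponds to $I=I_{\max}$ (the first $2N_A$ sorted indices) and its leading behavior is $(\det c|_{I_{\max}})\,e^{\Lambda^{2N_A}_{\max}t}\,t^{C^{2N_A}_{\max}}$ up to a bounded oscillatory prefactor. Defining \emph{generic} as $\det c|_{I_{\max}}\neq 0$ makes the non-generic subsystems a proper Zariski-closed set in the Grassmannian $\mathrm{Gr}(2N_A,V^*)$, hence of measure zero, and for generic $A$ the norm $\|\omega(t)\|$ (computed with any fixed background metric on $\wedge^{2N_A}V^*$) is dominated by this single summand, giving
\begin{align*}
S_A(t)\sim\ln\|\omega(t)\|\sim \Lambda^{2N_A}_{\max}\,t+C^{2N_A}_{\max}\,\ln t,
\end{align*}
which, combined with the bound $S_A\leq R_A+(\ln 2-1)N_A$, upgrades the Renyi asymptotics to the entanglement-entropy asymptotics.

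The main obstacle is controlling the bounded oscillatory prefactor arising from complex eigenvalues: one must verify that the magnitude of $\mu_{I_{\max}}(t)$ does not periodically approach zero in a way that would spoil the leading asymptotics. As noted in the proof of Theorem~\ref{volume}, the $\pi/2$ phase offset between the $\sigma=+$ and $\sigma=-$ partners prevents such degeneracy: for generic $A$ both $+$ and $-$ components are present, their combined magnitude is bounded away from zero, and the subleading wedge indices $I\neq I_{\max}$ are strictly smaller in the lexicographic $(\mathbb{E},\mathbb{P})$ ordering, so they cannot cancel the leading term. A minor secondary point, which also follows from this subspace-invariance of each Jordan block, is that the choice of background metric on $V^*$ used to define $\mathrm{Vol}$ is irrelevant for the leading asymptotics, consistent with the universality already asserted in Theorem~\ref{volume}.
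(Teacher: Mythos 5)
Your proposal is correct and follows essentially the same route as the paper: expand the spanning vectors of $A^*$ in the sorted generalized-eigenvector basis, identify genericity with the nonvanishing of the leading $2N_A\times 2N_A$ minor of the change-of-basis matrix (a measure-zero/Zariski-closed complement), and reduce the asymptotics to the maximal parallelepiped of Theorem~\ref{maxvolum}. The wedge-product/Cauchy--Binet packaging and the explicit attention to the oscillatory prefactor are a somewhat more careful rendering of the paper's row-reduction argument (which passes to $\tilde\theta^j=\Phi^j+\sum_{i>2N_A}c_i\Phi^i$ and asserts dominance of the first $2N_A$ directions), but the underlying idea is identical.
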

\begin{proof}
Using the volume formula, we reduce the problem to studying the time evolution of a $2N_A$ dimensional parallelepiped. However, in contrast to Theorem~\ref{volume} and \ref{maxvolum}, the parallelepiped is not necessarily spanned by the generalized eigenvectors $\eig^{l\sigma}_k(\kappa)$. Still, we can always decompose the $2N_A$ vectors in terms of the $2N$ generalized eigenvectors $\Phi^j$ sorted as explained in Theorem~\ref{maxvolum}. This leads to the transformation matrix $T$ with column vectors $\vec{t}_i$:
\begin{align}
\left(\begin{array}{c}
\theta^1\\[0em]
\vdots\\[0em]
\theta^{2N_A}
\end{array}\right)=\left(\begin{array}{ccc}
\smash{\fbox{\color{black}\rule[-37pt]{0pt}{1pt}$\,\,T^1_1\,\,$}} & \cdots & \smash{\fbox{\color{black}\rule[-37pt]{0pt}{1pt}$\,T^{2N}_1$}}\\[0em]
\vdots & \ddots & \vdots\\[0em]
\myunderbrace{\mystrut{1.5ex}T^1_{2N_A}}{\vec{t}_1} & \cdots & \myunderbrace{\mystrut{1.5ex}T^{2N}_{2N_A}}{\vec{t}_{2N}}
\end{array}\right)\left(\begin{array}{c}
\Phi^1\\[0em]
\vdots\\[0em]
\Phi^{2N}
\end{array}\right)\,.\color{white}{\begin{array}{c}
	.\\[0em] \\[0em] \\[0em] \\[0em] ,
	\end{array}}
\end{align}
Provided that the first $2N_A$ columns $\vec{t}_i$ are linearly independent, we can build the invertible $2N_A$-by-$2N_A$ matrix $U=\left(\vec{t}_1,\cdots,\vec{t}_{2N_A}\right)$ and move to the new basis vectors $\tilde{\theta}^j$:
\begin{align}
	\left(\begin{array}{c}
	\tilde{\theta}^1\\[0em]
	\vdots\\[0em]
	\tilde{\theta}^{2N_A}
	\end{array}\right)=
	\left(\begin{array}{cccc}
	\mathds{1}_{2N_A} & U^{-1}\vec{t}_{2N_A+1} & \cdots & U^{-1}\vec{t}_{2N}
	\end{array}\right)\left(\begin{array}{c}
	\Phi^1\\[0em]
	\vdots\\[0em]
	\Phi^{2N}
	\end{array}\right).
\end{align}
The vectors $\tilde{\theta}^j$ span another parallelepiped in the same subspace $A\subset V$, which grows asymptotically as $\mathcal{V}_A$. Moreover, each vector $\tilde{\theta}^j$ is of the form
\begin{align}
	\tilde{\theta}^j=\Phi^j+\!\!\!\!\!\!\sum^{2N}_{i=2N_A+1}\!\!\!\!\!c_i\Phi^i\,,
\end{align}
with some coefficients $c_i$. This ensures that the volume growth is dominated by the first $2N_A$ vectors $\Phi^j$, which therefore leads to the same asymptotic behavior as the one of the parallelepiped spanned by just $(\Phi^1,\cdots,\Phi^{2N_A})$. This asymptotics was already derived in Theorem \ref{maxvolum}. Only if the subsystem $A$ is such that the first $2N_A$ columns of $T$ are not linearly independent the asymptotics will change and its analysis is more complicated. However, such subsystems correspond to a subset of measure zero in the space of all subsystems. This legitimizes our statement that the entanglement production found here is generic for almost all subsystems.
\end{proof}

\section{Beyond Gaussian states and quadratic Hamiltonians\label{numerical}}
In Sec.~\ref{theoretical}, we restricted our study to quadratic time-independent Hamiltonians and Gaussian initial states. We were able to derive the exact asymptotics of the entanglement entropy thanks to powerful analytical tools to compute the following:
\begin{itemize}
	\item \textbf{Entanglement entropy of Gaussian states:}\\
	Calculating the entanglement entropy of an arbitrary state in the Hilbert space is a challenging computational problem. (Its computational cost scales with the dimension of the Hilbert space.) Here we are interested in bosonic Hilbert spaces that are infinite dimensional. In order to use standard numerical methods, we need to truncate the Hilbert space to a finite-dimensional subsector, and compute the entanglement entropy of states projected onto that subsector. This is only a good approximation if the states of interest have little overlap with the orthogonal complement of the truncated space. Analytical methods only exist for specific subclasses of states and specific systems decompositions. However, an important subclass consists of Gaussian states. For those, we presented a wide range of analytical techniques to compute and bound the entanglement entropy for arbitrary system decompositions (up to a measure zero set). In particular, writing the entanglement entropy of a state as the volume of a region $\mathcal{V}_A$ was crucial:
	\begin{align}
		S_A\approx \ln\mathrm{Vol}(\mathcal{D}_A)\,.
	\end{align}
	\item \textbf{Time evolution of quadratic Hamiltonians:}\\
	For generic Hamiltonians, the quantum time evolution is more complicated than the classical one. While the Hamiltonian equations of motion are ordinary differential equations, quantum evolution is based on the Schr\"odinger equation, which is a partial differential equation. Knowing the classical solution of the equation of motion does not help to find the quantum mechanical time evolution, unless the Hamiltonian is quadratic. In this special case, the time evolution of arbitrary connected $n$-point functions $C_{|\psi(t)\rangle}^{a_1\cdots a_n}$, for arbitrary initial states $|\psi_0\rangle$, can be computed from the classical time evolution $M(t)^a{}_b$:
	\begin{align}
		C_{|\psi(t)\rangle}^{a_1\cdots a_n}=M(t)^{a_1}{}_{b_1}\cdots M(t)^{a_n}{}_{b_n} C_{|\psi_0\rangle}^{b_1\cdots b_n}\,.
	\end{align}
	Another important property of quadratic Hamiltonians is their interplay with Gaussian states. If the Hamiltonian is quadratic, an initial Gaussian state remains Gaussian at all times. Since our analytical techniques only allow us to compute the entanglement entropy of Gaussian states, we can evolve them only with quadratic Hamiltonians to be able to study the time evolution of the entanglement entropy. Moreover, the change of the entanglement entropy is the result of the change of the volume of $\mathcal{V}_A$ under the classical Hamiltonian flow $M(t)$:
	\begin{align}
		S_A(t)\sim \ln\mathrm{Vol}\left[M^\intercal(t)\mathcal{V}_A\right]\,.
	\end{align}
\end{itemize}
It is natural to ask in which way our conclusions change in systems that violate one or both these simplifying conditions. We can distinguish the following three scenarios:
\begin{enumerate}
	\item Quadratic Hamiltonian, non-Gaussian initial states
	\item Non-quadratic Hamiltonian, Gaussian initial states
	\item Non-quadratic Hamiltonian, non-Gaussian initial states
\end{enumerate}
For the first scenario, there is already an analytic upper bound for the entanglement entropy production \cite{bianchi17}, which we test numerically and find to be saturated at long times. For scenarios two and three, we discuss in which regimes the analytically obtained behaviors for Gaussian initial states evolving under quadratic Hamiltonians still apply.

In order to study the entanglement entropy, we use a class of toy models that show different asymptotic features and which are simple enough to allow us to evaluate the entanglement entropy numerically with high accuracy. We consider two degrees of freedom with creation (annihilation) operators $\hat{a}_i^\dagger$ ($\hat{a}_i$) for $i\in \{1,2\}$. We truncate the corresponding Fock space $\mathcal{H}_V$ of two degrees of freedom to the following $(T+1)$-dimensional subsector
\begin{align}\label{span}
	\mathcal{H}_\mathrm{trunc}=\mathrm{span}\{|n,n\rangle\,\text{with}\,0\leq n\leq T\}\subset\mathcal{H}_V
\end{align}
where we choose different truncation sizes with up to $T=100,\!000$. This truncation can be used only if the time evolution results in a state that mostly remains in the truncated subspace. We therefore require that the Hamiltonian $\hat{H}$ commutes with the difference number operator $\hat{n}_1-\hat{n}_2=\hat{a}_1^\dagger \hat{a}_1-\hat{a}_2^\dagger \hat{a}_2$. This ensures that the time evolution preserves the subspace whose states $|\psi\rangle$ satisfy $(\hat{n}_1-\hat{n}_2)|\psi\rangle=0$. Specifically, we choose
\begin{align}\label{twosite_hamiltonian}
	\hat{H}(\Delta,U)=(\hat{n}_1+\hat{n}_2)+\Delta\left(\hat{a}_1^\dagger \hat{a}_2^\dagger+\hat{a}_1\hat{a}_2\right)+ \frac{U}{2}\left(\hat{n}_1^2+\hat{n}_2^2\right)
\end{align}
as our Hamiltonian, with free parameters $\Delta$ and $U$. 

The noninteracting part of this Hamiltonian can be expressed in the basis of  $(\hat{q}_1,\hat{q}_2,\hat{p}_1,\hat{p}_2)$, which was introduced in Eq.(\ref{new basis}), as
\begin{align}\label{twosite_hamiltonian_qpbasis}
	\begin{gathered}
		\hat{H}(\Delta,0)=\frac{1}{2}(\hat{p}_{1}^{2}+\hat{p}_{2}^{2}-2\Delta \hat{p}_{1}\hat{p}_{2})+\frac{1}{2}(\hat{q}_{1}^{2}+\hat{q}_{2}^{2}+2\Delta \hat{q}_{1}\hat{q}_{2}), \\
		\quad\text{with}\\
		h\equiv\left(\begin{array}{cccc}
		1 & -\Delta & 0 & 0\\[0em]
		-\Delta & 1 & 0 & 0\\[0em]
		0 & 0 & 1 & \Delta\\[0em]
		0 & 0 & \Delta & 1
		\end{array}\right)\,\Rightarrow\, K\equiv\left(\begin{array}{cccc}
		0 & 0 & \Delta & -1\\[0em]
		0 & 0 & -1 & \Delta\\[0em]
		\Delta & 1 & 0 & 0\\[0em]
		1 & \Delta & 0 & 0
		\end{array}\right)\,.
	\end{gathered}
\end{align}	

Different choices of those parameters correspond to different classes of Hamiltonians.
\begin{itemize}
	\item[(1)] \textbf{Quadratic Hamiltonian} $\hat{H}(\Delta,0)$:\\
	The eigenvalues of the matrix $K$ are: $K=\{\sqrt{\Delta ^{2}-1},\sqrt{\Delta ^{2}-1},-\sqrt{\Delta ^{2}-1},-\sqrt{\Delta ^{2}-1}\}$. 
	Hence,  the Hamiltonian is : (a) unstable for $|\Delta|>1$ (all eigenvalues of $K$ are real), (b) stable for $|\Delta|<1$ (all eigenvalues of $K$ are imaginary), and (c) metastable for $|\Delta|=1$ (all eigenvalues of $K$ are zero). We then consider: (a) $\Delta=1.5$, (b) $\Delta=0.5$, and (c) $\Delta=1$. The entanglement entropy is computed by evaluating Eq.~(\ref{eq:master}), with $J(t)=M(t)J_0 M^{-1}(t)$ and $M(t)=e^{tK}$. The initial complex structure $J_0$ can be computed from the initial $2$-point function $(G_0)^{ab}$ via $(J_0)^a{}_{b}=-(G_0)^{ac}\omega_{cb}$.
	\item[(2)] \textbf{Non-quadratic Hamiltonian} $\hat{H}(\Delta,U)$:\\
	We investigate the effect of non-quadratic perturbations by adding the quartic term $(\hat{n}^2_1+\hat{n}^2_2)$ with a small prefactor $U>0$. This Hamiltonian is always bounded from below. $\hat{H}(\Delta,0)$ is the expansion of $\hat{H}(\Delta,U)$ to quadratic order in creation and annihilation operators.
\end{itemize}
We compute the time evolution of the entanglement entropy for various initial states:
\begin{itemize}
	\item[(A)] \textbf{Fock states:} $|n,n\rangle$.
	\item[(B)] \textbf{Gaussian states:} $\varrho_{|n,n\rangle}$.\\
	We construct the Gaussian part of $|n,n\rangle$, namely, $\varrho_{|n,n\rangle}$, numerically. The state $\varrho_{|n,n\rangle}$ has the same $1$- and $2$-point functions as $|n,n\rangle$, but all higher $n$-point functions are constructed via Wick's theorem. Note that this may mean that $\varrho_{|n,n\rangle}$ is not a pure state. The entanglement entropy $S_A(\varrho_{|n,n\rangle})$ is computed using $\varrho_{|n,n\rangle}$ to evaluate our analytic expressions, and is compared to the entanglement entropy $S_A(|n,n\rangle)$. They satisfy the inequality $S_A(|n,n\rangle)\leq S_A(\varrho_{|n,n\rangle})$ at all times \cite{bianchi17}.
	\item[(C)] \textbf{Random state:} $|\mathrm{ran}\rangle$.\\
	In order to study the behavior of the entanglement entropy for random states in the truncated Hilbert space, we generate a random state
	\begin{align}
		|\mathrm{ran}\rangle=\frac{1}{\sqrt{\sum^{100}_{n=0}c_n^{\,2}}}\sum^{100}_{n=0}c_n\,|n,n\rangle\,,
	\end{align}
	where $c_n$ are selected randomly between 0 and 1 with uniform probability.
\end{itemize}
We compare our analytical results for Gaussian states and quadratic Hamiltonians (lines) with numerical computations of various Gaussian and non-Gaussian states evolving under quadratic and non-quadratic Hamiltonians (symbols). Figure~\ref{fig1}(a) illustrates that our numerical results agree perfectly with the analytical ones for the Gaussian initial state $|0,0\rangle$ evolving under a quadratic Hamiltonian. For the non-Gaussian initial state $|1,1\rangle$, see Fig.~\ref{fig1}(b), we compare the numerical results to the analytical ones for $\varrho_{|1,1\rangle}$. The latter serve as an upper bound to the former ones \cite{bianchi17}.

\begin{figure}[t]
	\begin{center}
		\begin{tikzpicture}
		\draw (0,0) node{\includegraphics[width=\linewidth]{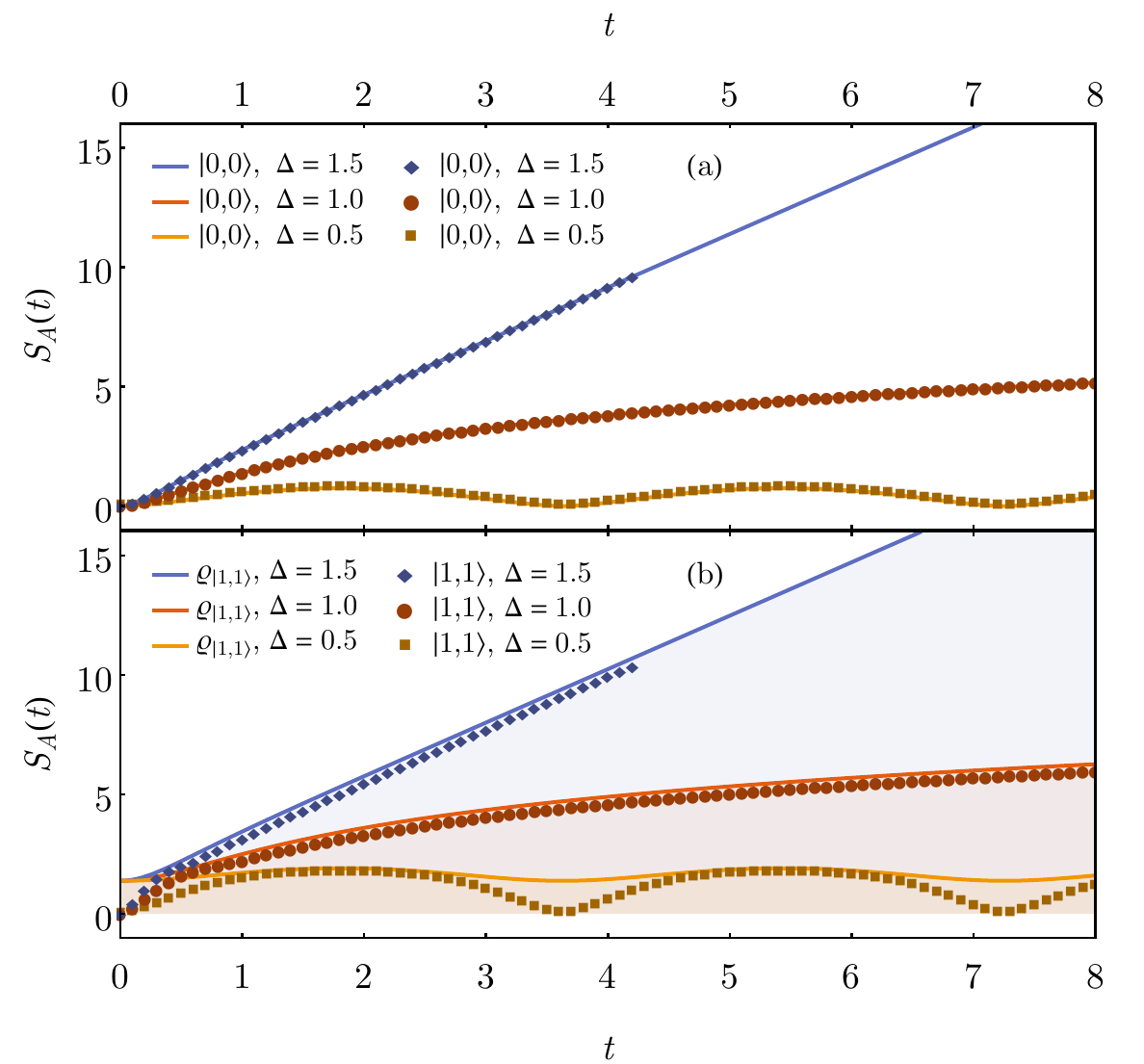}};
		\end{tikzpicture}
	\end{center}
	\vspace{-.8cm}
	\caption{Quadratic Hamiltonians with Gaussian and non-Gaussian initial states. We compare numerical (symbols) and analytical (lines) results for the unstable ($\Delta=1.5$), stable ($\Delta=0.5$), and metastable ($\Delta=1.0$) Hamiltonians. (a) Gaussian initial state $|0,0\rangle$. (b) Non-Gaussian state initial state $|1,1\rangle$ compared to analytical results for the Gaussian state $\varrho_{|1,1\rangle}$. The Hilbert space truncation is $T=50,\!000$.}
\label{fig1}
\end{figure}

\subsection{Non-Gaussian initial states evolving under quadratic Hamiltonians}
\begin{figure}[!t]
	\begin{center}
		\begin{tikzpicture}
		\draw (0,0) node{\includegraphics[width=\linewidth]{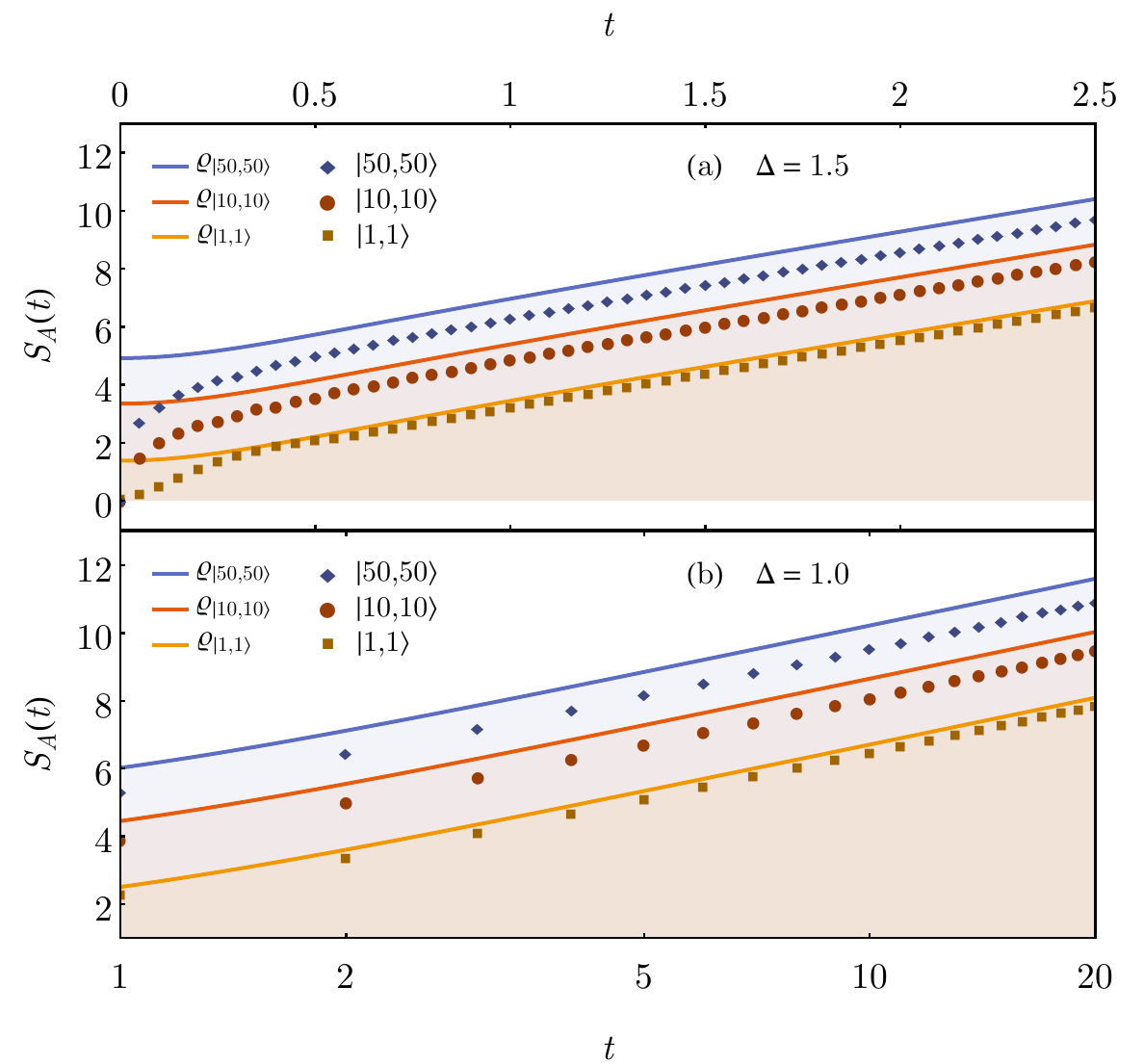}};
		\end{tikzpicture}
	\end{center}
	\vspace{-.8cm}
	\caption{Entanglement production for non-Gaussian initial states. We compare analytical (lines) with numerical (symbols) results for: (a) an unstable Hamiltonian ($\Delta=1.5$) for a truncation $T=50,\!000$, and (b) a metastable Hamiltonian ($\Delta=1.0$) for a trunction $T=100,\!000$.}
\label{fig2}
\end{figure}

\begin{figure}[!b]
	\begin{center}
		\begin{tikzpicture}
		\draw (0,0) node{\includegraphics[width=\linewidth]{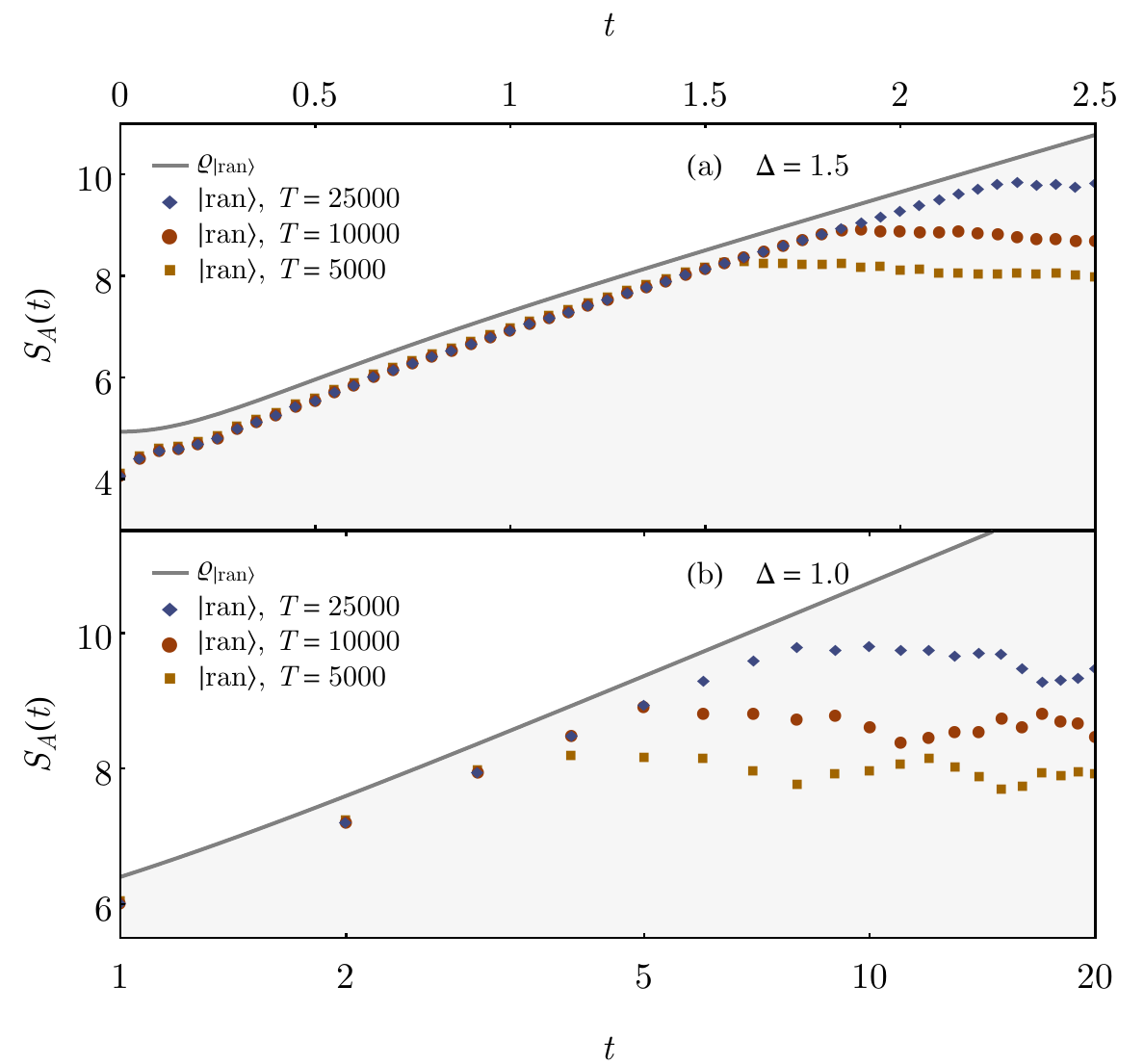}};
		\end{tikzpicture}
	\end{center}
	\vspace{-.8cm}
	\caption{Entanglement production for a random non-Gaussian initial state. We compare analytical (lines) with numerical (symbols) results for: (a) an unstable ($\Delta=1.5$), and (b) a metastable ($\Delta=1.0$) Hamiltonian when the Hilbert space truncation $T$ is varied.}
\label{fig3}
\end{figure}
In Fig.~\ref{fig2}, we present a more comprehensive study of the entropy production during the dynamics of non-Gaussian initial states under quadratic Hamiltonians. We compare the entanglement entropy $S_A(t)$ for the non-Gaussian initial states $|1,1\rangle$, $|10,10\rangle$, and $|50,50\rangle$ with the one of the corresponding Gaussian initial states $\varrho_{|n,n\rangle}$, for an unstable [Fig.~\ref{fig2}(a)] and a metastable [Fig.~\ref{fig2}(b)] Hamiltonian. The numerical results show that the corresponding Gaussian initial states provide an upper bound for the non-Gaussian ones. More importantly, we find that $S_A(t)$ for non-Gaussian states shows the characteristic asymptotic behavior of the entropy derived analytically for Gaussian states. Namely, a linear increase for unstable Hamiltonians and a logarithmic one for metastable ones. This leads us to conjecture that for both, unstable and metastable Hamiltonians, the asymptotic predictions for Gaussian initial states apply to any initial state. 

Results for the entropy production during the dynamics of an initial random state, for different truncations of the Hilbert space, are presented in Fig.~\ref{fig3}. The asymptotic behavior (with increasing $T$) of the entanglement entropy in Fig.~\ref{fig3} is identical to the one in Fig.~\ref{fig2}. Finite values of $T$ lead to a saturation of the entanglement entropy at times that grow with increasing $T$.

\subsection{Gaussian initial states evolving under non-quadratic Hamiltonians}
\begin{figure}[!b]
	\begin{center}
		\begin{tikzpicture}
		\draw (0,0) node{\includegraphics[width=\linewidth]{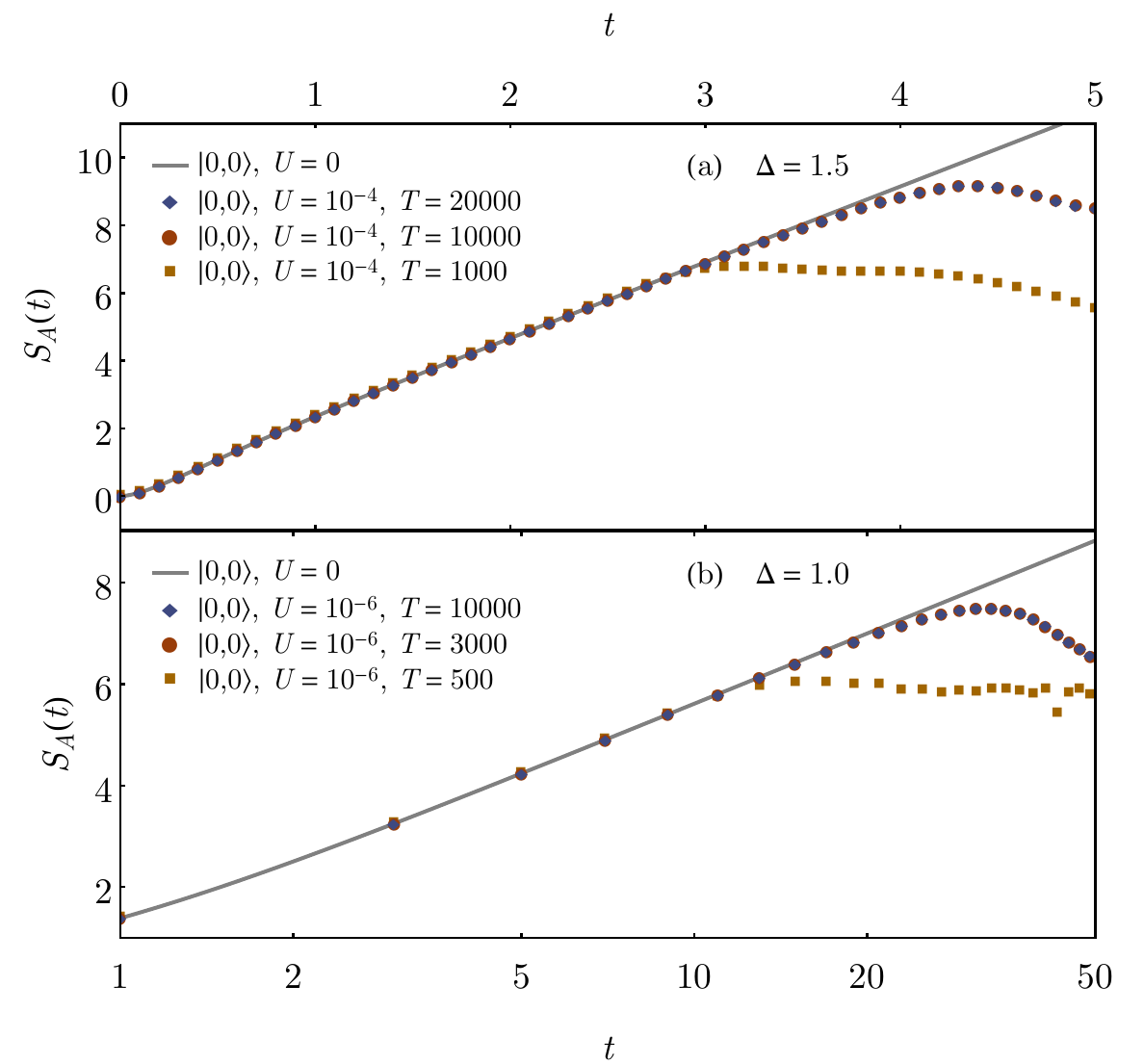}};
		\end{tikzpicture}
	\end{center}
	\vspace{-.8cm}
	\caption{Entanglement production for a Gaussian initial state ($|0,0\rangle$) evolving under non-quadratic Hamiltonians. The entanglement production under non-quadratic Hamiltonians (symbols) is compared to the one under their quadratic parts (lines). We also vary the Hilbert space truncation $T$. The results for the two largest values of $T$ overlap in each panel.}
\label{fig4}
\end{figure}

\begin{figure}[!t]
	\begin{center}
		\begin{tikzpicture}
		\draw (0,0) node{\includegraphics[width=\linewidth]{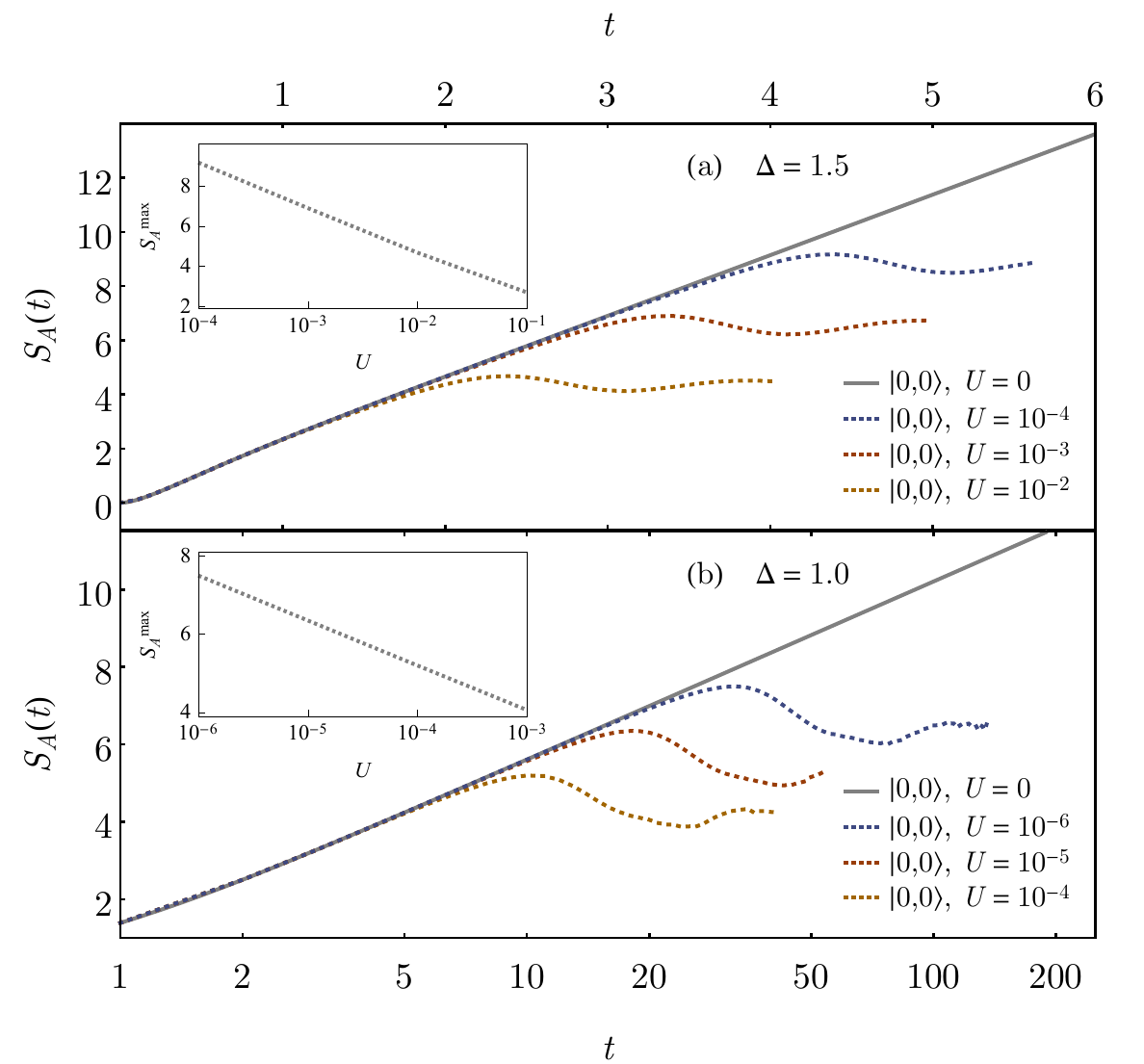}};
		\end{tikzpicture}
	\end{center}
	\vspace{-.8cm}
	\caption{Entanglement entropy saturation for a Gaussian initial state ($|0,0\rangle$) evolving under non-quadratic Hamiltonians. In each panel, we compare results obtained for different values of $U$. In the insets, we plot the entanglement entropy at the first maximum during the time evolution, $S_A^{\mathrm{max}}$, as a function of $U$. The Hilbert space truncation is $T=50,\!000$.}
\label{fig5}
\end{figure}

Next, we consider Gaussian initial states evolving under non-quadratic Hamiltonians. Because of the non-quadratic part, the overall Hamiltonian is bounded from below, which implies that the entanglement entropy is bounded from above by the thermal entropy at the energy of the time-evolving state. We are interested in the regime in time in which the entanglement production exhibits the same behavior as for the nearby quadratic Hamiltonian. It is expected that, as long as the non-quadratic part can be neglected compared to the quadratic one for a given initial state, the entanglement production will be dominated by the quadratic part. However, there will always be a time at which the entanglement entropy will depart from the result for the quadratic part. As mentioned before, an initial Gaussian state does not remain Gaussian under the time evolution with a non-quadratic Hamiltonian.

In Fig.~\ref{fig4}, we show the entanglement production for an initial Gaussian state ($|0,0\rangle$) evolving under unstable [Fig.~\ref{fig4}(a)] and metastable [Fig.~\ref{fig4}(b)] Hamiltonians. We compare the numerical results with the analytical prediction for the entanglement production of the corresponding quadratic Hamiltonian. We find a perfect agreement at short and intermediate times, but eventually the entanglement entropy in the non-quadratic systems saturates. In Fig.~\ref{fig4}, we report results for various Hilbert space truncations to demonstrate that, for sufficiently large truncations, the saturation value is independent of the truncation chosen. Figure~\ref{fig5} shows how the saturation value depends on $U$ for sufficiently large Hilbert space truncations. In particular, in the insets we plot the entanglement entropy at the first maximum during the time evolution, $S_A^{\mathrm{max}}$, as a function of $U$. In the regime studied, $S_A^{\mathrm{max}}$ decreases near logarithmically with $U$.

\subsection{Non-Gaussian initial states evolving under non-quadratic Hamiltonians}
\begin{figure}[!t]
	\begin{center}
		\begin{tikzpicture}
		\draw (-4,0) node{\includegraphics[width=\linewidth]{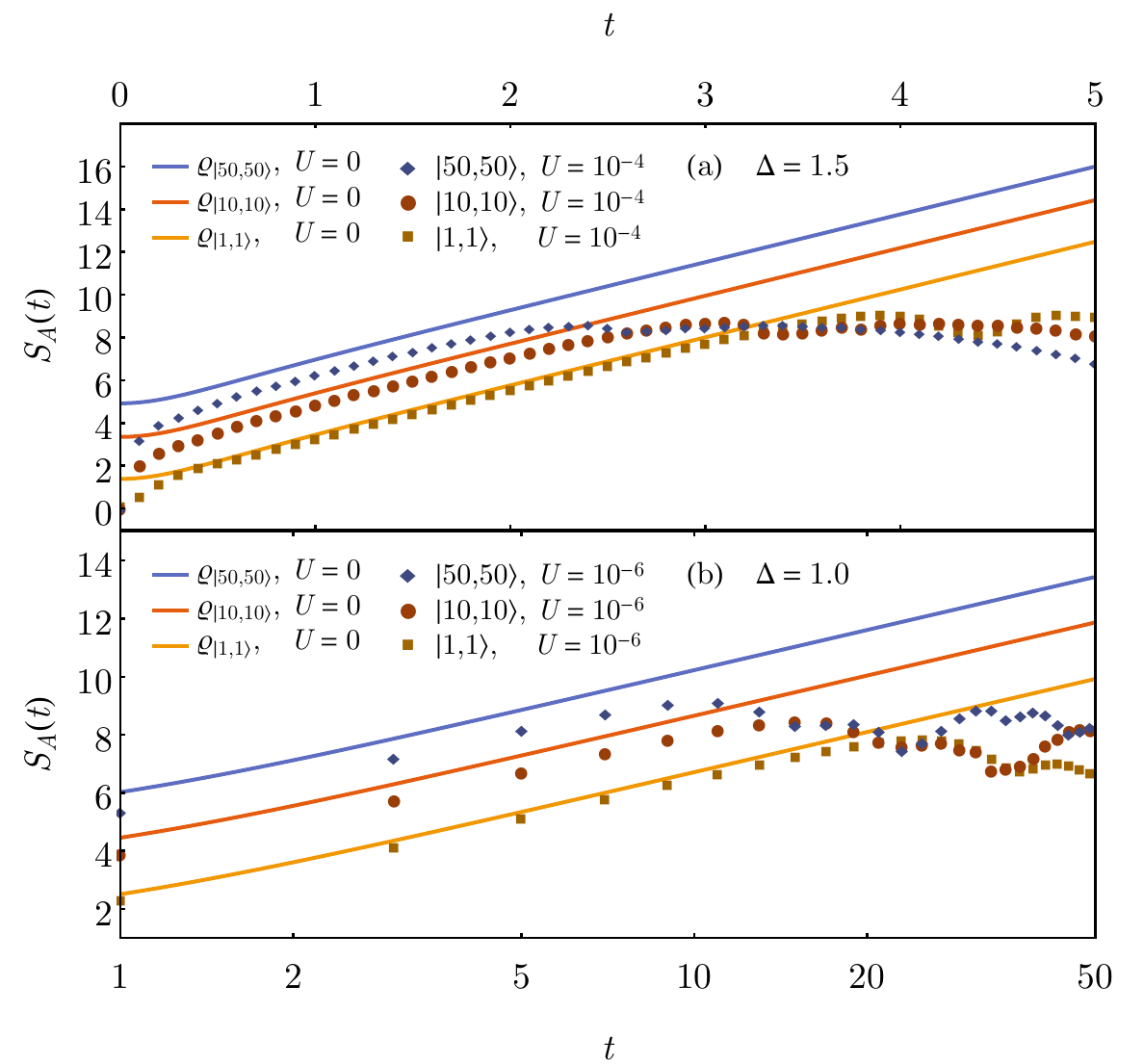}};
		\end{tikzpicture}
	\end{center}
	\vspace{-.8cm}
	\caption{Entanglement production for non-Gaussian initial states evolving under non-quadratic Hamiltonians. We compare analytical results for dynamics under quadratic Hamiltonians (lines) with numerical results (symbols) for: (a) an unstable ($\Delta=1.5$), and (b) a metastable ($\Delta=1.0$) Hamiltonian. The Hilbert space truncation is $T=50,\!000$.}
\label{fig6}
\end{figure}

\begin{figure}[!b]
	\begin{center}
		\begin{tikzpicture}
		\draw (4,0) node{\includegraphics[width=\linewidth]{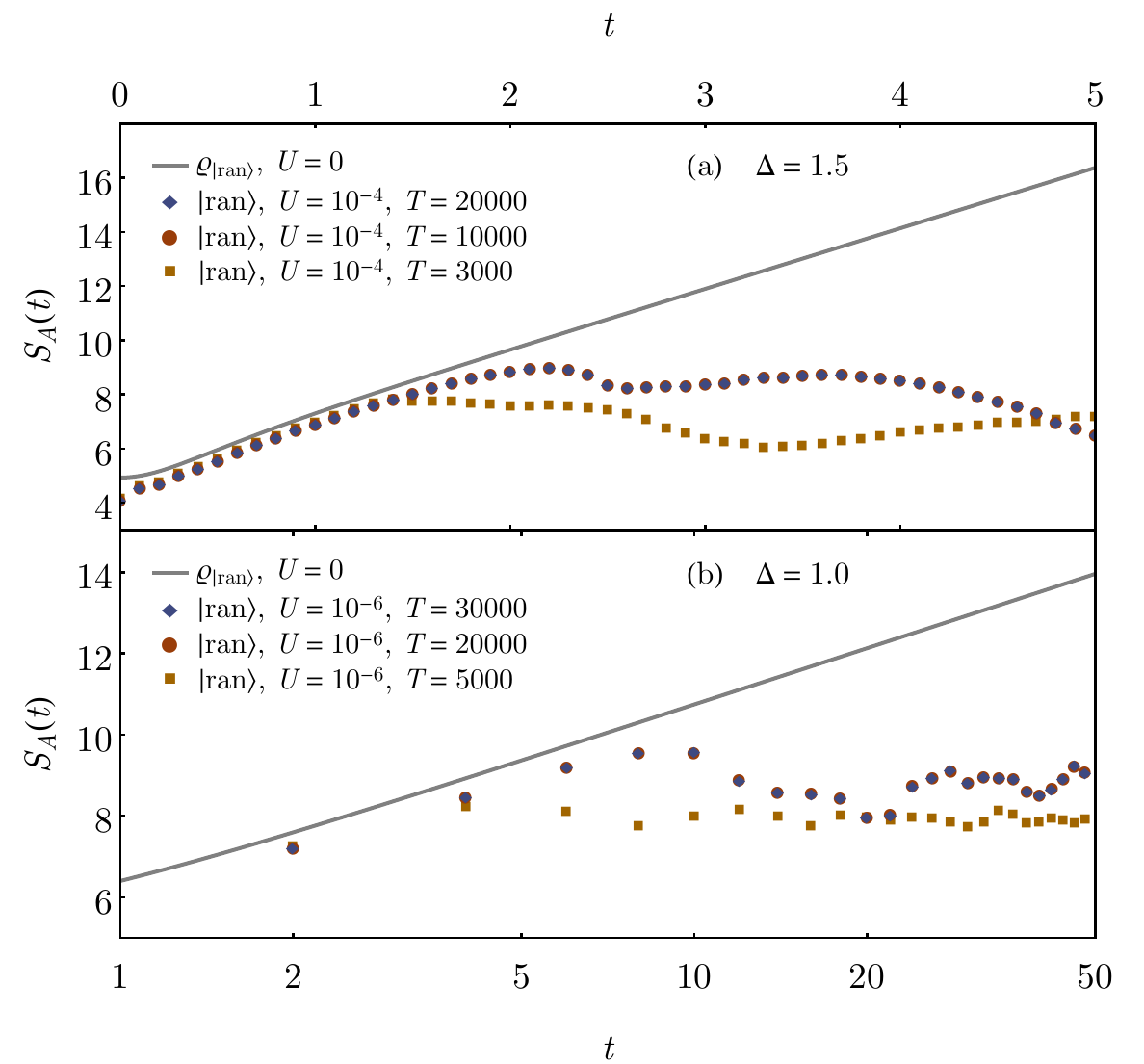}};
		\end{tikzpicture}
	\end{center}
	\vspace{-.8cm}
	\caption{Entanglement production for a random non-Gaussian initial state evolving under non-quadratic Hamiltonians. We compare the analytical results for dynamics under quadratic Hamiltonians (lines) with the numerical results (symbols) for: (a) an unstable ($\Delta=1.5$), and (b) a metastable ($\Delta=1.0$) Hamiltonian when the Hilbert space truncation $T$ is varied. The results for the two largest values of $T$ overlap in each panel.}
\label{fig7}
\end{figure}

To close the numerical part of our study, we explore the evolution of non-Gaussian initial states under non-quadratic Hamiltonians. Clearly, if the initial state has an entanglement entropy that is close to that of the thermal state with the same energy, there is no intermediate regime in the entanglement entropy dynamics exhibiting a linear or logarithmic growth (the entanglement entropy cannot change much). Hence, we focus on non-Gaussian initial states whose entanglement entropy is much smaller than the thermal one.

The results for the entanglement production at short and intermediate times in such states are very similar for the dynamics under non-quadratic Hamiltonians (see Figs.~\ref{fig6} and~\ref{fig7}) and quadratic ones (see Figs.~\ref{fig2} and~\ref{fig3}). The main difference between them is the $U$ dependent saturation that occurs in the former. At intermediate times they are all very similar, and are bounded from above by the analytical predictions for Gaussian initial states.

\section{Discussion\label{discussion}}

\subsection{Relation to global quenches}
Entanglement entropy dynamics after uantum quenches have been extensively studied in free field theories \cite{Calabrese:2005in,calabrese2007quantum,cotler2016entanglement} and in many-body quantum systems \cite{dechiara2006entanglement,fagotti2008evolution,eisler2008entanglement,lauchli2008spreading,kim2013ballistic,alba2017entanglement}. In the usual set up for global quenches, an initial state $|\psi_0\rangle$ (commonly the ground state of some initial Hamiltonian) is evolved with a time-independent Hamiltonian (of which $|\psi_0\rangle$ is not an eigenstate). In the context of free field theories, $|\psi_0\rangle$ is usually a translationally invariant state with short-ranged entanglement. The time evolution of an arbitrary initial state studied in this work can be considered to be the result of a global quench. We have focused on initial states with low entanglement.

Evolving a low-entanglement state with a local Hamiltonian generally leads to a propagation of correlations that results in a linear growth of the entanglement entropy (an important exception being many-body localized systems~\cite{nandkishore_huse_review_15}). As mentioned before, in systems with free quasi-particles, the linear growth can be understood to be the result of the entanglement produced by the free propagation of the quasi-particles, with the entanglement production rate determined by the propagation speed \cite{Calabrese:2005in}. 

At this point, it is important to emphasize that the physical mechanism of linear entanglement production due to {\it instabilities}, presented in this paper, is manifestly different from the linear entanglement production due to {\it propagation of quasi-particles}. The distinction is best explained for bosonic Gaussian states, for which the entanglement entropy can be explicitly decomposed into a sum in which each addend is associated with a specific entangled pair consisting of a single degree of freedom in the subsystem and one in the complement. This leads to a pair of eigenvalues $\pm\ii\nu_i$ of the restricted complex structure $[J]_A$ with a contribution to the entanglement entropy given by
\begin{equation}
	S(\nu_i)=\frac{\nu+1}{2}\ln\frac{\nu_i+1}{2}-\frac{\nu_i-1}{2}\ln\frac{\nu_i-1}{2}\,.
\end{equation}
We can now distinguish the following mechanisms:
\begin{itemize}
	\item \textbf{Entropy production due to instabilities}\\
	This is the mechanism relevant to the systems studied in this paper. The entanglement entropy grows because the time-evolution leads to an exponential growth $\nu_i\sim \ee^{\lambda_i t}$ of certain eigenvalues of $[J]_A$ corresponding to the unstable direction of the time-evolution. Thus, each addend of $S_A(t)=\sum_i S(\nu_i)$ grows as $S(\nu_i)\sim \lambda_i t$ leading to a production rate $\Lambda_A=\sum_i\lambda_i$. This mechanism was extensively studied in Ref.~\cite{bianchi17}, where the connection between instabilities and Lyapunov exponents of classical dynamical systems was established. We emphasize that this mechanism only works for bosons as the entanglement entropy of bosonic pairs can be arbitrarily large. The entanglement entropy of fermionic pairs is bounded from above by $\ln2$.
	\item \textbf{Production due to propagation}\\
	This is the mechanism studied in Refs.~\cite{Calabrese:2005in,calabrese2007quantum,cotler2016entanglement} for free bosonic field theories. Here, the entanglement entropy grows because successively more degrees of freedom become entangled with each other, but for each entangled pair the entanglement entropy is bounded. For Gaussian states, this means that most eigenvalues $\nu_i$ start off close to $1$ with $S(1)=0$. The time-evolution results in individual eigenvalues $\nu_i$ successively evolving from $1$ to a maximal value $\nu_{\max}>1$. The speed of propagation $c$ determines how many eigenvalues move from $1$ to $\nu_{\max}$ in a given time interval. In this case, the entanglement entropy at time $t$ is given by $S_A(t)=\sum_iS(\nu_i)=c\,t\,S(\nu_{\max})$, so that the production rate is $c\,S(\nu_{\max})$. Clearly, this mechanism requires a large number of degrees of freedom that become successively entangled and, for any finite system, the entanglement entropy eventually saturates at $S_{\mathrm{sat}}=\mathrm{min}(N_A,N_B)\,S(\nu_{\max})$ when all possible degrees of freedom in $A$ and $B$ are maximally entangled with each other.
\end{itemize}
The two mechanisms require distinct settings to be tested. To study entanglement production due to instabilities, the Hilbert space per degree of freedom needs to be very large to capture the exponential growth, but it is less important to have systems with many degrees of freedom. In fact, computational limitations forced us in the present work to choose the minimal system with just two degrees of freedom. In contrast, entropy production due to entanglement propagation can be studied with relatively small Hilbert spaces per degree of freedom, but it is essential to have a large number of degrees of freedom to actually observe the effect of the propagation.

\subsection{Application to periodically driven systems}
We introduced the notion of a Floquet Hamiltonian $\hat{H}_{\mathrm{F}}$, which is derived from a Hamiltonian with periodic time-dependence, $\hat{H}(t+T)=\hat{H}(t)$. This allows one to study the entanglement entropy $S_A(t_n)$ at times $t_n=nT$. Even though one does not know a priori how much the entanglement entropy fluctuates during the intermediate times, the stroboscopic analysis typically suffices to predict the correct asymptotics and the production rate. In this case, the eigenvalues of the symplectic generator $K$ are referred to as Floquet exponents. While their imaginary part encodes the contributing phase shifts that are added each period, their real parts are the Lyapunov exponents $\lambda_i$ that determine directly the asymptotic behavior of the entanglement entropy.

\subsection{Conjecture for non-Gaussian initial states evolving under quadratic Hamiltonians}
Based on our numerical results for the entanglement entropy of quadratic Hamiltonians with non-Gaussian initial states, we put forward the following conjecture.
\begin{conjecture}\label{con1}
Given a bosonic system with a time-independent quadratic Hamiltonian, the time evolution of the entanglement entropy $S_A(t)$ for \underline{arbitrary} initial states with finite average energy\footnote{The requirement of finite average energy is with respect to positive-definite quadratic Hamiltinians and excludes unphysical states, such as $|\psi\rangle=\frac{\sqrt{6}}{\pi}\sum^\infty_{n=1}\frac{1}{n}|n,n\rangle$ with $\langle\psi|\hat{n}_i|\psi\rangle=\infty$. It is equivalent to requiring a finite covariance matrix of the state.} has the same asymptotic behavior found for Gaussian initial states, namely
\begin{align}
	S_A(t)=\Lambda_A\,t+C_A\,\ln(t)+X_A(t)\,,
\end{align}
where only $X_A(t)$ depends on the initial state.
\end{conjecture}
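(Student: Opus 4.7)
My plan is to sandwich $S_A(|\psi(t)\rangle)$ between an upper bound of Gaussian type and a lower bound of Renyi-2 type, both of which will be shown to have the asymptotics $\Lambda_A\,t+C_A\ln(t)+O(1)$ already obtained for Gaussian initial states in Theorems~\ref{volume}--\ref{th:general}. The upper bound is immediate from the inequality $S_A(|\psi(t)\rangle)\leq S_A(\varrho_{|\psi(t)\rangle})$ of Refs.~\cite{holevo1999capacity,bianchi17}. Because $\varrho_{|\psi\rangle}$ is determined solely by the first two moments, and Ehrenfest's theorem for a quadratic $\hat H$ evolves these as $z(t)=M(t)z_0$ and $G(t)=M(t)G_0M^\intercal(t)$, Gaussification commutes with the dynamics: the (generally mixed) Gaussian state $\varrho_{|\psi(t)\rangle}$ has the same covariance-matrix flow as the pure Gaussian states treated in Sec.~\ref{theoretical}. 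In the regime of unboundedly growing symplectic eigenvalues of $[G(t)]_A$, the von Neumann entropy of the reduced Gaussian state reduces asymptotically to $\tfrac12\ln\det[G(t)]_A=\ln\mathrm{Vol}[M^\intercal(t)\mathcal{V}_A]$, so Theorems~\ref{volume} and~\ref{th:general} apply verbatim and give $S_A(\varrho_{|\psi(t)\rangle})=\Lambda_A\,t+C_A\ln(t)+O(1)$ for a generic subsystem.

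For the matching lower bound I would use the universal relation $R_A(|\psi(t)\rangle)\leq S_A(|\psi(t)\rangle)$ and argue that the Renyi-2 entropy of the non-Gaussian reduced state grows at exactly the same leading rate. The essential tool is the symmetrically-ordered characteristic function $\chi_0(\xi)=\langle\psi_0|\ee^{\ii\omega_{ab}\xi^a\hat{\xi}^b}|\psi_0\rangle$, which under quadratic evolution transforms purely classically as $\chi(\xi,t)=\chi_0(M^\intercal(-t)\xi)$, since $\ee^{\ii\hat Ht}\hat\xi^a\ee^{-\ii\hat Ht}=M(-t)^a{}_b\hat\xi^b$. Tracing out $B$ restricts the characteristic function to $A^*$ by setting the complementary arguments to zero, and the purity of the reduced state takes the form
\begin{align}
\mathrm{Tr}(\hat\rho_A^2(t))=\frac{1}{(2\pi)^{N_A}}\int_{A^*}\!d^{2N_A}\xi_A\,\bigl|\chi_0\bigl(M^\intercal(-t)\iota(\xi_A)\bigr)\bigr|^2,
\end{align}
where $\iota:A^*\hookrightarrow V^*$ is the canonical inclusion. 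Decomposing $V^*$ along the Jordan blocks of $K^\intercal$ used in Theorem~\ref{volume} and changing variables along the stretching directions of $M^\intercal(-t)\iota(A^*)$ then produces a Jacobian whose logarithm is precisely $\ln\mathrm{Vol}[M^\intercal(t)\mathcal{V}_A]$, so that $R_A(|\psi(t)\rangle)=\Lambda_A\,t+C_A\ln(t)+O(1)$.

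The main obstacle is controlling that change of variables uniformly in $t$: one must show that $|\chi_0|^2$ is concentrated near the origin on a fixed scale set by $G_0^{ab}$, and that the stretched hyperplane $M^\intercal(-t)\iota(A^*)$ eventually contains this concentration region with a remainder that is $o(1)$ relative to the Jacobian. Here the finite-average-energy hypothesis of the conjecture is decisive: it guarantees $G_0^{ab}<\infty$ and hence bounds $|\chi_0(\eta)|^2$ by a Gaussian profile with covariance proportional to $G_0^{-1}$, delivering the required $L^1$ integrability along every direction of $V^*$. The non-generic subsystems excluded in Theorem~\ref{th:general} (those for which the first $2N_A$ generalized eigenvectors of $K^\intercal$ fail to be linearly independent on $A^*$) would require a separate limiting argument but form a set of measure zero. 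Combining the upper and lower bounds forces $X_A(t)=S_A(t)-\Lambda_A\,t-C_A\ln(t)$ to be bounded, which proves the conjecture for generic subsystems.
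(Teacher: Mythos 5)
The statement you are addressing is presented in the paper only as a \emph{conjecture}: the authors support it with numerical evidence (Figs.~\ref{fig2} and~\ref{fig3}) and explicitly state that an analytical proof ``will require novel tools to bound the entanglement entropy of non-Gaussian states from below.'' Your upper bound is sound and is exactly the bound the paper already has: $S_A(|\psi(t)\rangle)\le S_A(\varrho_{|\psi(t)\rangle})$ together with the fact that Gaussification commutes with quadratic evolution, so Theorems~\ref{volume}--\ref{th:general} control the Gaussian side. The entire open content of the conjecture is the matching lower bound, and that is where your argument has a genuine gap.

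The gap is the claim that finite average energy ``bounds $|\chi_0(\eta)|^2$ by a Gaussian profile with covariance proportional to $G_0^{-1}$, delivering the required $L^1$ integrability along every direction of $V^*$.'' This is false: finite second moments control only the behavior of $\chi_0$ near the origin (its Hessian there is fixed by $G_0$), not its decay at infinity. There are finite-energy pure states whose characteristic functions decay slowly, and states (e.g.\ cat states) for which $|\chi_0|$ has $O(1)$ peaks far from the origin, so the ``concentration near the origin'' premise also fails. What your change of variables actually requires is that $\int|\chi_0|^2$ over the $2N_A$-dimensional stretched hyperplanes $M^\intercal(-t)\iota(A^*)$ be bounded above \emph{uniformly in $t$}; purity of $|\psi_0\rangle$ only gives $\int_{V^*}|\chi_0|^2=(2\pi)^N$, and $L^1$ on the full $2N$-dimensional phase space does not imply integrability on lower-dimensional slices, let alone uniformly over a $t$-dependent family of slices whose orientation keeps rotating (stable part of $K$) while being sheared (metastable part). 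Supplying such a uniform slice bound from the finite-covariance hypothesis is precisely the missing ``novel tool'' the authors point to; until it is supplied, the lower bound on $R_A(|\psi(t)\rangle)$, and hence the boundedness of $X_A(t)$, is not established, and your argument remains a plausible strategy rather than a proof.
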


We emphasize that our numerical study should be seen as a first step towards a better understanding of how broadly this conjecture applies. Due to computational limitations, our numerical calculations were limited to systems with just two degrees of freedom, but with a large dimension of the truncated Hilbert space. We expect qualitatively similar results for systems with more degrees of freedom, for which we presented the general production Theorem \ref{th:general}. An analytical proof of our conjecture will require novel tools to bound the entanglement entropy of non-Gaussian states from below, possibly in the spirit of power inequalities as presented in Ref.~\cite{de2015multimode}.

\begin{acknowledgments}
LH thanks Giacomo De Palma, Albert Werner and Jens Eisert for discussions. This work was supported by NSF Grants No.~PHY-1404204 (EB) and PHY-1707482 (MR), a Frymoyer fellowship and a Mebus fellowship (LH), by the Army Research Office Grant No.~W911NF1410540 (RM and MR). The computations were carried out at the Institute for CyberScience at Penn State. This research was supported in part by the Perimeter Institute for Theoretical Physics.
\end{acknowledgments}

\appendix
\section{Jordan normal form of real matrices\label{jordan}}
Most square matrices are diagonalizable.\footnote{By ``most'', we mean all but a subset of measure zero. In particular, all Hermitian and anti-Hermitian matrices are diagonalizable. This because those matrices are special cases of matrices that commute with their adjoint (with respect to the chosen inner product). Those matrices are called normal and it can be proved that all normal matrices are diagonalizable.} However, there exists a subset of matrices that cannot be brought into diagonal form by an equivalence transformation and so they are non-diagonalizable. We consider the most general decomposition that can be applied to all real square matrices, even if they are not diagonalizable.

The Jordan normal form is standard material in linear algebra \cite{weintraub2009jordan}. However, it is useful to review the construction procedure using a convention that illuminates its application in this paper.
\begin{theorem}[Jordan normal form]
Given a real linear map\footnote{We refer to this map as the transpose of $K$ to match the notation in the main text where $K^\intercal$ plays an important role. Due to the fact that the eigenvalues of $K$ and $K^\intercal$ are the same, the full analysis can be equivalently carried out for $K$.} $K^\intercal: V^*\to V^*$ on a finite dimensional real vector space $V^*$ with $n$ distinct eigenvalues $\kappa_1,\cdots,\kappa_n$ satisfying $\mathrm{Im}(\kappa_i)\geq 0$ (meaning, for complex eigenvalues we only take the one with positive imaginary part), there always exists a basis, such that the matrix representation of $K$ is block-diagonal as
\begin{align}
\begin{split}
	&K^\intercal\equiv\left(\begin{array}{cccc}
	\boxed{J(\kappa_1)} & & &\\[0em]
	& \boxed{J(\kappa_2)} & & \\[0em]
	& &\ddots &\\[0em]
	& & & \boxed{J(\kappa_n)}
	\end{array}\right),\quad\text{with}\\
	&J(\kappa)\equiv\left(\begin{array}{ccc}
	\boxed{J_1(\kappa)} & &\\[0em]
	 &\ddots &\\[0em]
	 & & \boxed{J_{j_\kappa}(\kappa)}
	\end{array}\right)
\end{split}
\end{align}
where the Jordan blocks $J_k(\kappa)$ are real square matrices of the following form:
\begin{itemize}
	\item \textbf{Real eigenvalue $\kappa=\lambda\in\mathbb{R}$}\\
	For real eigenvalues $\kappa$, the Jordan block $J_k(\kappa)$ can have an arbitrary dimension $j_i(\kappa)$ and takes the following form
	\begin{align}
		\qquad J_k(\kappa)=\underbrace{\left(\begin{array}{cccc}
		\lambda & & &\\[0em]
		& \lambda & & \\[0em]
		& &\ddots &\\[0em]
		& & & \lambda
		\end{array}\right)}_{A_k(\kappa)}+\underbrace{\left(\begin{array}{cccc}
		0 & 1& &\\[0em]
		& 0 & \ddots& \\[0em]
		& &\ddots &1\\[0em]
		& & & 0
		\end{array}\right)}_{C_k(\kappa)}\,.
	\end{align}
	\item \textbf{Complex eigenvalue $\kappa=\lambda+\ii\omega\in\mathbb{C}$}\\
	For a complex eigenvalue $\kappa$, the Jordan block $J_k(\kappa)$ must have an even dimension $\dim J_k(\kappa)$ and takes the following form
	\begin{align}
	\small
	J_k(\kappa)=&\underbrace{\left(\begin{array}{cccc}
		\lambda \mathds{1}_2 & & &\\[0em]
		& \lambda \mathds{1}_2 & & \\[0em]
		& &\ddots &\\[0em]
		& & & \lambda \mathds{1}_2
		\end{array}\right)}_{A_k(\kappa)}\\
	&+\underbrace{\left(\begin{array}{cccc}
		\omega \mathds{B}_2 & & &\\[0em]
		& \omega \mathds{B}_2 & & \\[0em]
		& &\ddots &\\[0em]
		& & & \omega \mathds{B}_2
		\end{array}\right)}_{B_k(\kappa)}
	+\underbrace{\left(\begin{array}{cccc}
		0 & \mathds{1}_2& &\\[0em]
		& 0 & \ddots& \\[0em]
		& &\ddots &\mathds{1}_2\\[0em]
		& & & 0
		\end{array}\right)}_{C_k(\kappa)}\,,\nonumber
	\end{align}
	where each entry represents a 2-by-2 matrix block that is either proportional to the identity $\mathds{1}_2$ or to the antisymmetric matrix
	\begin{align}
		\mathds{B}_2=\left(\begin{array}{cc}
		0 & 1\\[0em]
		-1 & 0
		\end{array}\right)\,.
	\end{align}
\end{itemize}
\end{theorem}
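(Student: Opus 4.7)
The plan is to reduce the real Jordan normal form to the complex Jordan normal form, which itself is obtained from the primary decomposition plus the structure of nilpotent operators. First I would complexify the vector space, setting $V^*_{\mathbb{C}} = V^*\otimes_{\mathbb{R}}\mathbb{C}$, and extend $K^\intercal$ to a complex-linear map on $V^*_{\mathbb{C}}$. The characteristic polynomial of $K^\intercal$ factors completely over $\mathbb{C}$, and the primary decomposition theorem gives the direct sum decomposition $V^*_{\mathbb{C}}=\bigoplus_{\kappa}W(\kappa)$, where $W(\kappa)=\ker\bigl((K^\intercal-\kappa\,\mathds{1})^{d_\kappa}\bigr)$ is the generalized eigenspace and $d_\kappa$ is the algebraic multiplicity of $\kappa$. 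Because the coefficients of $K^\intercal$ are real, complex conjugation on $V^*_{\mathbb{C}}$ interchanges $W(\kappa)$ and $W(\bar\kappa)$, a property that will be essential in the second half.

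Next, on each $W(\kappa)$ the operator $N_\kappa=(K^\intercal-\kappa\,\mathds{1})\big|_{W(\kappa)}$ is nilpotent. I would build a Jordan chain basis by the standard descending filtration: set $W_m=\ker N_\kappa^m$, choose a complement $U_{d_\kappa}$ of $W_{d_\kappa-1}$ in $W_{d_\kappa}$, lift its basis under $N_\kappa$, extend to a complement of $W_{d_\kappa-2}$ in $W_{d_\kappa-1}$, and iterate. This produces a disjoint union of Jordan chains $\{v,N_\kappa v,\dots,N_\kappa^{j-1}v\}$, and each chain, read in reverse order so that $N_\kappa$ acts by the superdiagonal, yields a single Jordan block of the form $\kappa\,\mathds{1}+C$ with $C$ the standard shift matrix. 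This establishes the complex Jordan normal form and determines both the number $j_\kappa$ and the sizes of the blocks $J_k(\kappa)$.

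For a real eigenvalue $\kappa=\lambda$, the generalized eigenspace $W(\lambda)$ is invariant under conjugation, and one can choose the Jordan chains to lie in the real part $W(\lambda)\cap V^*$, giving directly the real block $J_k(\kappa)=A_k(\kappa)+C_k(\kappa)$ of the statement. For a complex eigenvalue $\kappa=\lambda+\ii\omega$ with $\omega>0$, I would pair each complex Jordan chain $\{v_1,\dots,v_j\}$ in $W(\kappa)$ with its conjugate chain $\{\bar v_1,\dots,\bar v_j\}$ in $W(\bar\kappa)$, and change to the real basis $\bigl(\mathrm{Re}(v_1),\mathrm{Im}(v_1),\mathrm{Re}(v_2),\mathrm{Im}(v_2),\dots\bigr)$. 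A direct computation of $K^\intercal$ on these pairs shows that $K^\intercal \mathrm{Re}(v_l)=\lambda\,\mathrm{Re}(v_l)-\omega\,\mathrm{Im}(v_l)+\mathrm{Re}(v_{l-1})$ and similarly for $\mathrm{Im}(v_l)$, which is exactly the decomposition $A_k(\kappa)+B_k(\kappa)+C_k(\kappa)$ with the $2\!\times\!2$ blocks $\lambda\mathds{1}_2$, $\omega\mathds{B}_2$, and $\mathds{1}_2$ on the superdiagonal. The block dimension is necessarily even, and summing over all $k$ and all $\kappa$ (with $\mathrm{Im}(\kappa)\geq 0$) yields the advertised block-diagonal form.

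The main obstacle is not any single step but the bookkeeping in the second paragraph: constructing the Jordan chain basis so that the chains are disjoint, span $W(\kappa)$, and have lengths that match the jumps $\dim W_m-\dim W_{m-1}$. The cleanest way I know is to prove by descending induction on $m$ that the image $N_\kappa(U_{m+1})\subset W_m\setminus W_{m-1}$ is linearly independent and can be completed to a complement of $W_{m-1}$ in $W_m$; this is the only place where a nontrivial linear-algebra argument is required. Once that is in place, the translation to the real $J_k(\kappa)$ blocks for complex eigenvalues is a routine change of basis, using only that complex conjugation is an $\mathbb{R}$-linear involution commuting with $K^\intercal$.
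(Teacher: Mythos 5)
Your proposal is correct and follows essentially the same route as the paper's proof: decompose into generalized eigenspaces, build Jordan chains by descending through the filtration $\ker(K^\intercal-\kappa\mathds{1})^m$ starting from the longest chains, and for complex $\kappa$ pair each chain with its conjugate and pass to real and imaginary parts to obtain the $2\times 2$ blocks $\lambda\mathds{1}_2+\omega\mathds{B}_2$ with $\mathds{1}_2$ on the superdiagonal. The only difference is one of emphasis --- you make explicit the linear-independence lemma for the lifted complements, which the paper's constructive procedure asserts implicitly, and the paper uses the normalization $\frac{1}{\sqrt{2}}(v\pm v^*)$ rather than $\mathrm{Re}(v)$, $\mathrm{Im}(v)$, which is immaterial.
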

\begin{proof}
We construct explicitly a basis in which $K$ takes the Jordan normal form as described above. The construction only involves computing eigenvalues and solving linear equations to find the kernel of a matrix.\\
\textbf{Step 1.} Compute the $n$ distinct eigenvalues $\kappa_1,\cdots,\kappa_n$ with $\mathrm{Im}(\kappa_i)\geq 0$ by finding the roots of the characteristic polynomial
\begin{align}
	\chi(\kappa)=\det(K^\intercal-\kappa\mathds{1})\,.
\end{align}
Complex eigenvalues always appear in conjugate pairs and we only include the one with positive imaginary part in our list.\\
\textbf{Step 2.} For every eigenvalue $\kappa$, we construct the corresponding block $J(\kappa)$ and its Jordan blocks $J_k(\kappa)$. For this, we need to study generalized eigenspaces. The generalized eigenspace of order $m$ is defined as
\begin{align}
E^{(m)}(\kappa)=\ker(K^\intercal-\kappa\mathds{1})^m\,.
\end{align}
The first order eigenspace $E^{(1)}(\kappa)$ is just the regular eigenspace, but higher order eigenspaces are larger. The number $j_\kappa$ of distinct Jordan blocks $J_k(\kappa)$ is given by $j_\kappa=\dim E^{(1)}(\kappa)$. The number of Jordan blocks of dimension $m$ is given by
\begin{align}
t_m=2\dim E^{(m)}(\kappa)-\dim E^{(m-1)}(\kappa)-\dim E^{(m+1)}(\kappa),
\end{align}
and each Jordan block is generated by a highest weight vector $\eig_k(\kappa)$. A highest weight vector generates a sequence of $m_k(\kappa)$ vectors
\begin{align}
\eig^1_k(\kappa),\cdots,\eig^{m_k(\kappa)}_k(\kappa)\,.
\end{align}
by repeatedly applying $(K^\intercal-\kappa\mathds{1})$ to it: $\eig^{l}_k(\kappa)=(K^\intercal-\kappa\mathds{1})^{m_k(\kappa)-l}\,\eig^{l}_k(\kappa)$. One can apply the following induction to find all highest weight vectors:
\begin{enumerate}
	\item We start with the largest $m$, for which $t_m\neq 0$ and select $t_m$ linearly independent vectors in $\eig_k(\kappa)\in E^{(m)}(\kappa)\setminus E^{(m-1)}(\kappa)$. After this, we construct for each vector $\eig_k(\kappa)$ the corresponding sequence
	\begin{align}
	\eig^1_k(\kappa),\cdots,\eig^{m_k(\kappa)}_k(\kappa)\,,
	\end{align}
	by repeatedly applying $(K^\intercal-\kappa\mathds{1})$ to it. 
	\item Next, we can go to $m-1$ and select $t_{m-1}$ additional vectors $\eig_k(\kappa)$ out of $E^{(m-1)}(\kappa)\setminus E^{(m-2)}(\kappa)$, which must not just be linearly independent among themselves, but also linearly independent from the vectors $\eig^l_k(\kappa)$ in the sequences generated so far. We can continue this process until we reach $m=1$ and need to select $t_1$ vectors in $E^{(1)}(\kappa)$ that must be linearly independent from all the vectors $\eig^l_k(\kappa)$ generated so far. This leaves us with $j_\kappa$ distinct sequences of the form
	\begin{align}
	\eig^1_k(\kappa),\cdots,\eig^{m_k(\kappa)}_k(\kappa)\,,
	\end{align}
	which span the space for the Jordan block $J_k(\kappa)$.
\end{enumerate}
Note that, for complex eigenvalues, the corresponding vectors $\eig^l_k(\kappa)$ are complex. Later, we construct real vectors out of them.\\
\textbf{Step 3:} Having constructed the sequences of generalized eigenvectors, we are ready to construct the different Jordan blocks explicitly. We need to distinguish two cases:
\begin{itemize}
	\item \textbf{Real eigenvalue $\kappa=\lambda\in\mathbb{R}$}\\
	Let us look at the sequence of $v^1_k(\kappa),\cdots, v^{m_k}_k(\kappa)$ of generalized eigenvectors. We can write down the action of $K^\intercal$ on each element and find
		\begin{align}
		K^\intercal\,\eig^l_k(\kappa)&=\lambda\,\eig^l_k(\kappa)+\eig^{l-1}_k(\kappa)\,,\\
		K^\intercal\,\eig^1_k(\kappa)&=\lambda\,\eig^1_k(\kappa)\,,
		\end{align}
	where we have $m_k\geq l>1$ in the first equation. This means that the action of $K^\intercal$ with respect to the generalized eigenvectors of the sequence $\eig^1_k(\kappa),\cdots, \eig^{m_k}_k(\kappa)$ is completely described by the matrix
		\begin{align}
		J_k(\kappa)=\left(\begin{array}{cccc}
			\lambda & 1& &\\[0em]
			& \lambda & \ddots& \\[0em]
			& &\ddots & 1\\[0em]
			& & & \lambda
			\end{array}\right)\,.
		\end{align}
	\item \textbf{Complex eigenvalues $\kappa=\lambda+\ii\omega\in\mathbb{C}$}\\
	For complex eigenvalues $\kappa$, $K^\intercal$ is only diagonalizable over the complex numbers, but not over the reals. However, due to the fact that $K^\intercal$ is a real matrix, we know that all sequences of generalized eigenvectors $\eig^l_k(\kappa)$, which we constructed in the previous step, are actually complex. Their complex conjugated counterparts $\eig^l_k(\kappa)^*$ are themselves generalized eigenvectors associated with the eigenvalue $\kappa^*=\lambda-\ii\omega$:
	\begin{align}
	\qquad K^\intercal\,\eig^l_k(\kappa)^*&=(\lambda-\ii\omega)\,\eig^l_k(\kappa)^*+\eig^{l-1}_k(\kappa)^*\,,\\
	\qquad K^\intercal\,\eig^1_k(\kappa)^*&=(\lambda-\ii\omega)\,\eig^1_k(\kappa)^*\,,
	\end{align}
	where we have $m_k\geq l>1$ in the first equation. We can take linear combinations of generalized eigenvectors and their complex conjugates to find the real vectors
	\begin{align}
		\eig^{l+}_k(\kappa)&=\frac{1}{\sqrt{2}}\left[\eig^l_k(\kappa)+\eig^l_k(\kappa)^*\right]\,,\\
		\eig^{l-}_k(\kappa)&=\frac{\ii}{\sqrt{2}}\left[\eig^l_k(\kappa)-\eig^l_k(\kappa)^*\right]\,.
	\end{align}
	The action of $K$ on these real vectors is then given by
	\begin{align}
	\qquad K^\intercal \eig^{l+}_k(\kappa)&=\lambda \mathcal{E}^{l+}_k(\kappa)+\omega\eig^{l-}_k(\kappa)+\eig^{(l-1)+}_k(\kappa)\,,\\
	\qquad K^\intercal \eig^{l-}_k(\kappa)&=\lambda \eig^{l-}_k(\kappa)-\omega\eig^{l+}_k(\kappa)+\eig^{(l-1)-}_k(\kappa)\,.
	\end{align}
	This means that the action of $K^\intercal$ on the generalized eigenvectors of the sequence $\eig^{1+}_k(\kappa),\eig^{1-}_k(\kappa)\cdots, \eig^{m_k+}_k(\kappa),\eig^{m_k-}_k(\kappa)$ is completely described by the matrix
	\begin{align}
	\quad J_k(\kappa)&=\left(\begin{array}{cccccccc}
	a & b& 1& &&&&\\[0em]
	-b & a & &1&&&&\\[0em]
	&& a & b&\ddots&&&\\[0em]
	&& -b & a &&\ddots&&\\[0em]
	&& && \ddots &&1&\\[0em]
	&&&& & \ddots &&1\\[0em]
	&& &&  &&a&b\\[0em]
	&&&& &  &-b&a\\[0em]
	\end{array}\right)\\
	&=\left(\begin{array}{cccc}
		a \mathds{1}_2+b\mathds{B}_2 & \mathds{1}_2 & &\\[0em]
		& a \mathds{1}_2+b\mathds{B}_2 &\ddots & \\[0em]
		& &\ddots &\mathds{1}_2\\[0em]
		& & & a \mathds{1}_2+b\mathds{B}_2
		\end{array}\right)\,.\nonumber
	\end{align}
\end{itemize}
This is our original claim.
\end{proof}
Bringing a linear map $K^\intercal$ into its Jordan normal form is a generalized diagonalization that can be applied even if $K^\intercal$ is non-diagonalizable. If $K^\intercal$ is diagonalizable over the real numbers, all Jordan blocks are $1\times1$ and the Jordan normal form is just the traditional diagonalization. We can use the construction presented here to decompose every linear map into three parts, which we call the Jordan-Chevalley decomposition.\footnote{Usually, the Jordan-Chevalley decomposition just refers to the decomposition of $K^\intercal$ into a diagonalizable part $K^\intercal_\mathrm{diagonalizable}$ and a nilpotent part $K^\intercal_\mathrm{nilpotent}$. However, for our purposes, it is important to split $K^\intercal_\mathrm{diagonalizable}$ into its real and imaginary parts as well. This is a straightforward generalization of the Jordan-Chevalley decomposition.}
\begin{corollary}[Jordan-Chevalley decomposition]
	The decomposition of each block $J_k(\kappa)$ into the three parts $A_k(\kappa)$, $B_k(\kappa)$, and $C_k(\kappa)$, where the block $B_k(\kappa)$ vanishes for real $\kappa$, induces an overall decomposition of the matrix representation of $K^\intercal$ with respect to our generalized eigenvectors into
	\begin{align}
	K^\intercal= A^\intercal+B^\intercal+C^\intercal\,.
	\end{align}
	Knowing this decomposition in a specific basis allows us to compute the decomposition of $K^\intercal$, or, equivalently, of $K$ into three parts
	\begin{align}
	K=K_\mathrm{real}+K_\mathrm{imaginary}+K_\mathrm{nilpotent}\,,
	\end{align}
	which can be expressed in an arbitrary basis. Here, $K_\mathrm{real}$ is diagonalizable with purely real eigenvalues, $K_\mathrm{imaginary}$ is diagonalizable with purely imaginary eigenvalues, and $K_\mathrm{niloptent}$ is not diagonalizable, but a nilpotent linear map. All three maps commute with each other.
\end{corollary}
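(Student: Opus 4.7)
The plan is to lift the block-level decomposition established in the Jordan normal form theorem to a global statement about $K$ and then pin down its invariant content, i.e., basis-independence. Concretely, I would proceed in four steps.

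First, I would read the decomposition off the Jordan normal form directly. In the basis of generalized eigenvectors constructed in the preceding theorem, $K^\intercal$ is block-diagonal with blocks $J_k(\kappa)=A_k(\kappa)+B_k(\kappa)+C_k(\kappa)$. Assembling the blocks yields $K^\intercal=A^\intercal+B^\intercal+C^\intercal$, where $A^\intercal$ is block-diagonal with scalar-multiple-of-identity blocks $\lambda\,\mathds{1}$, $B^\intercal$ is block-diagonal with $\omega\,\mathds{B}_2$ entries on the diagonal (and vanishes on blocks with real eigenvalue), and $C^\intercal$ is strictly block-upper-triangular built out of $\mathds{1}_2$ (or $1$) entries on the first super-diagonal. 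Transposing gives $K=K_\mathrm{real}+K_\mathrm{imaginary}+K_\mathrm{nilpotent}$ in the dual basis.

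Second, I would verify the claimed properties directly from these normal forms. The matrix $A^\intercal$ is manifestly diagonal in the chosen basis with real eigenvalues $\mathrm{Re}(\kappa_i)$, hence $K_\mathrm{real}$ is diagonalizable with purely real eigenvalues. The matrix $B^\intercal$ is block-diagonal in $2\times 2$ blocks $\omega\mathds{B}_2$, each of which has eigenvalues $\pm\ii\omega$ and is diagonalizable over $\mathbb{C}$, so $K_\mathrm{imaginary}$ is diagonalizable with purely imaginary spectrum. Finally $C^\intercal$ is nilpotent because a power equal to the largest Jordan-block size annihilates every sequence $\eig^{l\sigma}_k(\kappa)$. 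Pairwise commutation is checked one Jordan block at a time: $A_k(\kappa)$ is a scalar on its block and therefore commutes with everything; $B_k(\kappa)$ and $C_k(\kappa)$ commute because $\mathds{B}_2$ commutes with $\mathds{1}_2$ and the shift structure of $C_k$ acts by blocks of $\mathds{1}_2$. Across distinct blocks the commutators vanish trivially. Transposition preserves commutation, diagonalizability and nilpotency, so the three parts of $K$ inherit all the same properties.

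Third, I would establish that the decomposition does not depend on the basis used. The point is that $K_\mathrm{real}+K_\mathrm{imaginary}$ is a semisimple (diagonalizable over $\mathbb{C}$) operator that commutes with the nilpotent $K_\mathrm{nilpotent}$, so the standard uniqueness of the Jordan--Chevalley decomposition fixes $K_\mathrm{real}+K_\mathrm{imaginary}$ and $K_\mathrm{nilpotent}$ intrinsically. The further splitting of the semisimple piece is then unique too: on each generalized eigenspace $E^{(m)}(\kappa)$ the semisimple part acts as multiplication by $\kappa$, and $K_\mathrm{real}$ is defined to act as multiplication by $\mathrm{Re}(\kappa)$ while $K_\mathrm{imaginary}$ acts as multiplication by $\ii\,\mathrm{Im}(\kappa)$ (on the real span of conjugate-paired eigenspaces). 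This spectral characterization is manifestly basis-free, and it matches the block-wise definitions above. Combined with the pairwise commutation, this gives all three operators an invariant meaning.

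The main obstacle I anticipate is the basis-independence step, specifically the bookkeeping required to show that splitting the semisimple part by real/imaginary spectrum is well defined on a \emph{real} vector space, where complex eigenvalues of $K$ come only as conjugate pairs. The cleanest way around this is to pass to the complexification $V_\mathbb{C}^*$, perform the spectral projections onto the eigenspaces of each $\kappa$, define the real and imaginary parts there, and then check that the resulting operators descend to $V^*$ because they commute with complex conjugation (equivalently, they are sums of contributions from conjugate pairs $\kappa,\kappa^*$ with matching real and imaginary coefficients). Once this is in place, the rest of the corollary follows from the explicit block structure already used in the Jordan normal form theorem.
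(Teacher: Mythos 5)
Your proposal is correct and follows the same route as the paper, which states this corollary without a separate proof precisely because the block-wise splitting $J_k(\kappa)=A_k(\kappa)+B_k(\kappa)+C_k(\kappa)$ was already exhibited explicitly in the proof of the Jordan-normal-form theorem; assembling the blocks, checking commutation and nilpotency block by block, and transposing is all that is needed. Your third step --- invoking the uniqueness of the semisimple-plus-nilpotent splitting and then separating the semisimple part by spectral projections on the complexification, checking compatibility with complex conjugation --- is a worthwhile addition, since the paper asserts that the decomposition ``can be expressed in an arbitrary basis'' (and the main text claims the induced decomposition of $\hat{H}$ is unique) without spelling out why the three parts are basis-independent.
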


\section{Classical time evolution of quadratic Hamiltonians\label{classical}}
Given a quadratic time-independent Hamiltonian $H=\frac{1}{2}h_{ab}\xi^a\xi^b$, the classical equations of motion are given by the Hamilton equations\footnote{We use the symplectic form $\Omega$ whose matrix representation in a canonical basis is $\Omega\equiv\left(\begin{array}{cc}
	0 & \mathds{1}\\[0em]
	-\mathds{1} & 0
	\end{array}\right)$.}
\begin{align}
	\dot{\xi}^a(t)=\Omega^{ab}h_{bc}\xi^c(t)\,.
\end{align}
This is a linear ordinary differential equation whose solution can be completely characterized by the Hamiltonian flow $M(t):V\to V$ with $\xi^a(t)=M(t)^a{}_b\xi^b(0)$. This implies the equation
\begin{align}
	\frac{d}{dt}\,M(t)^a{}_b=\Omega^{ac}h_{cb}\,.
\end{align}
If we define $K^a{}_b=\Omega^{ac}h_{cb}$, the solution $M(t)^a{}_b$ can be written as a matrix exponential:
\begin{align}
	M(t)^a{}_b=\exp\left(t\,K\right)^a{}_b\,.
\end{align}
The matrix $K$ is the infinitesimal generator of the canonical transformation $M(t)$ and satisfies\footnote{In matrix notation, this is the well-known condition $K\Omega+\Omega K^\intercal=0$ for $K$ to be an element of the symplectic algebra $\mathrm{sp}(2N)$. It is equivalent to requiring $K^a{}_b=\Omega^{ac}h_{cb}$ with symmetric $h_{ab}=h_{ba}$.}
\begin{align}
	K^a{}_c\Omega^{cb}+\Omega^{ac}(K^\intercal)_c{}^b=0\,,
\end{align}
where $K^\intercal$ is the transpose of $K$. Therefore, the exponential $M(t)$ preserves $\Omega$ through
\begin{align}
	M(t)^a{}_c\,\Omega^{cd}\,M^\intercal(t)_d{}^b=\Omega^{ab}\,,
\end{align}
which ensures that $M(t)$ is a proper canonical transformation preserving the Poisson bracket.

Studying the time evolution of the entanglement entropy of Gaussian states is equivalent to asking how the classical time evolution $M(t)$ stretches regions of subspaces in classical phase space. For time-independent Hamiltonians, the time evolution $M(t)$ is completely characterized by the Jordan normal form of the symplectic generator $K$. We can use the Jordan-Chevalley decomposition $K=K_\mathrm{real}+K_\mathrm{imaginary}+K_\mathrm{nilpotent}$ to decompose
\begin{align}
	M(t)=\ee^{Kt}=\ee^{K_\mathrm{real}t}\ee^{K_\mathrm{imaginary}t}\ee^{K_\mathrm{nilpotent}t}\,,
\end{align}
where we use the fact that the three parts commute with each other. Let us analyze the action of the three parts on an arbitrary generalized eigenvector $\eig^{l\pm}_k(\kappa)$ for some complex eigenvalue pair $\kappa=\lambda\pm \ii\omega$.
\begin{itemize}
	\item[(a)] \textbf{Real part: Exponential stretching $S(t)\sim\Lambda_A t$}\\
	The real part $K_\mathrm{real}$ has the simplest effect. We find immediately
	\begin{align}
		\ee^{K_\mathrm{real}t}\eig^{l\pm}_k(\kappa)=\ee^{\lambda t}\eig^{l\pm}_k(\kappa)\,,
	\end{align}
	which means the generalized eigenvector is stretched (or squeezed if $\lambda<0$) exponentially with a factor $\ee^{\lambda t}$. For a generic $2N_A$ dimensional region, only the $2N_A$ directions that are stretched the fastest will contribute. In summary, the real part contributes an exponential part $\ee^{\Lambda_A t}$ to the time dependence of a volume, where $\Lambda_A$ is a sum over the largest real parts $\lambda_i$ of the eigenvalues $\kappa_i$. This leads to a linear contribution $\Lambda_A t$ to the entanglement entropy.
	\item[(b)] \textbf{Imaginary part: Rotations $\Rightarrow$ $S_A\sim X_A(t)$}\\
	Let us recall that the imaginary eigenvalues of $K_\mathrm{imaginary}$ always come in conjugate pairs. If we only consider the action of generalized eigenvectors, we find
	\begin{align}
		K_\mathrm{imaginary} \eig^{l+}_k(\kappa)&=\omega\eig^{l-}_k(\kappa)\,,\\
		K_\mathrm{imaginary} \eig^{l-}_k(\kappa)&=-\omega\eig^{l+}_k(\kappa)\,.
	\end{align}
	This implies that the exponentiated action is given by
	\begin{align}\tiny
	&\ee^{K_\mathrm{imaginary}t}
	\left(\begin{array}{c}
	\eig^{l+}_k(\kappa)\\[0em]
	\eig^{l-}_k(\kappa)
	\end{array}\right) \nonumber\\ &=\left(\begin{array}{cc}
	\cos{\omega t} & \sin{\omega t}\\[0em]
	-\sin{\omega t} & \cos{\omega t}
	\end{array}\right)\left(\begin{array}{c}
	\eig^{l+}_k(\kappa)\\[0em]
	\eig^{l-}_k(\kappa)
	\end{array}\right)\,,
	\end{align}
	which corresponds to a rotation in the plane spanned by $\mathcal{E}^{l\pm}_k(\kappa)$. Note that we cannot distinguish a rotation on a circle from an elliptical orbit unless we choose a metric to compute the length of vectors. However, the orbit is always bounded, which means that the imaginary part of $K$ can only change the volume of a region by a constant. A vector that is a general linear combination of many different generalized eigenvectors will follow a complicated, but bounded, trajectory resulting from a superposition of the rotations in different planes with different frequencies $\omega_i =\mathrm{Im}(\kappa_i)$. The imaginary part is responsible for the bounded and oscillating contribution $X_A(t)$ to the entanglement entropy.
	\item[(c)] \textbf{Nilpotent part: Shearing $\Rightarrow$ $S_A(t)\sim C_A\ln{(t)}$}\\
	The nilpotent part $K_\mathrm{nilpotent}$ corresponds to a shearing in the plane spanned by all the vectors of a specific Jordan block $J_k(\kappa)$. The action of the generator
	\begin{align}
	\qquad K_\mathrm{nilpotent}\eig^{l\pm}_k(\kappa)=\eig^{(l-1)\pm}_k(\kappa)\,,\quad \eig^{0}_k(\kappa)=0,
	\end{align}
	exponentiates to the action
	\begin{align}
	\qquad \ee^{K_\mathrm{nilpotent}t}\,\eig^{l\pm}_k(\kappa)=\sum^l_{l'=1}\frac{t^{l-l'}}{(l-l')!}\eig^{l'\pm}_k(\kappa)\,.
	\end{align}
	The volume of a region spanned by \emph{all} vector $\eig^{l'\pm}_k(\kappa)$ with $1\leq l'\leq l$ does not change under this time evolution, but for a generic region that is only stretched along some directions of this subspace, there will be a polynomial stretching. The largest exponent of a single direction is clearly given by $\dim J_k(\kappa)-1$. If we need to choose $n_k(\kappa)$ vectors in the Jordan block $J_k(\kappa)$, the maximal exponent is given by
	\begin{align}
		\left[\dim J_k(\kappa)-n_k(\kappa)\right]\,n_k(\kappa)\,.
	\end{align}
	In summary, the nilpotent part only contributes a polynomial growth with integer exponents ${C_A}$ to the time dependence of a volume. After taking the logarithm, we have a contribution $C_A\ln(t)$ to the entanglement entropy.
\end{itemize}

\section{Numerical calculations\label{numericalmethods}}
In a lattice, unlike for fermions, bosons have an infinite local dimension of the Hilbert space. Hence, even the two-site Hamiltonian~\eqref{twosite_hamiltonian} cannot be diagonalized exactly. In order to do numerical calculations, we truncate the Hilbert space to a $(T+1)$-dimensional subsector given by Eq.~\eqref{span}. We use full exact diagonalization within the truncated basis to obtain the entanglement entropy in all cases except for $\Delta=1$ and $U=0$ (for which the entanglement production is logarithmic). In what follows, we explain the numerical technique used to obtain the entanglement entropy for $\Delta=1$ and $U=0$, which is more efficient than full exact diagonalization. 

A many body state of the Hamiltonian~\eqref{twosite_hamiltonian} can be written as $|\psi(t)\rangle=\sum_{n}C_{n}(t) |n,n\rangle$. The elements of the reduced density matrix of site $1$, which is diagonal with diagonal matrix elements $\rho_{n,n}^{1}(t)=C_{n}(t)C_{n}^{*}(t)$, can be computed using that
\begin{equation}
f_k(t)=\langle\psi(t)|(\hat{a}_{1}^{\dagger}\hat{a}^{}_1)^k|\psi(t)\rangle
=\sum_{n}n^{k}\rho_{n,n}^{1}(t),
\label{Eq: reduced matrix}
\end{equation}
so that
\begin{equation}
\rho_{n,n}^{1}(t)=\sum_kV^{-1}_{nk}f_{k}(t),\label{Eq:inverse}
\end{equation}
where $V$ is a Vandermonde matrix, whose inverse can be calculated analytically~\cite{Knuth.1997}.

The computation of $f_{k}(t)$ is done in the Heisenberg picture,
\begin{equation}
f_k(t)=\langle\psi(0)|\left[\hat{a}_{1}^{\dagger}(t)\hat{a}^{}_{1}(t)\right]^k|\psi(0)\rangle,
\label{rd}
\end{equation}
by noticing that the Hamiltonian~\eqref{twosite_hamiltonian} can be written in a diagonal form using the operators: $\hat{a}^{\dagger}_{e} = (\hat{a}^{\dagger}_{1}+\hat{a}^{\dagger}_{2}) / \sqrt{2}$ and $\hat{a}^{\dagger}_{o} = (\hat{a}^{\dagger}_{1}-\hat{a}^{\dagger}_{2}) / \sqrt{2}$, which obey standard bosonic commutation relations. In terms of these operators, the Hamiltonian reads
\begin{equation}
\hat{H}(\Delta,0) = \hat{a}_{e}^{\dagger}\hat{a}^{}_{e} + \hat{a}_{o}^{\dagger}\hat{a}^{}_{o} + \frac{\Delta}{2} (\hat{a}_{e}^{\dagger}\hat{a}_e^{\dagger} - \hat{a}_{o}^{\dagger}\hat{a}_{o}^{\dagger} + \text{H.c.}).
\end{equation}
The Heisenberg equations of motion for $\hat{a}^{\dagger}_{e}$ and $\hat{a}^{\dagger}_{o}$ are
\begin{equation}
\begin{aligned}
\dot{\hat{a}}_e^{\dagger}&= \ii[\hat{H},\hat{a}_e^{\dagger}]= \ii \hat{a}_e +\ii \hat{a}_e^{\dagger}, \\
\dot{\hat{a}}_o^{\dagger}&=\ii[\hat{H},\hat{a}_o^{\dagger}]=-\ii \hat{a}_o +\ii \hat{a}_o^{\dagger}, \label{p1}
\end{aligned}
\end{equation}
from which it follows that
\begin{equation}
\begin{aligned}
\dot{\hat{a}}_e^{\dagger}+\dot{\hat{a}}_e&= 0, \\[0em]
\dot{\hat{a}}_e^{\dagger}-\dot{\hat{a}}_e&=2\ii(\hat{a}_e^{\dagger}+\hat{a}_e). 
\end{aligned}
\end{equation}
Hence
\begin{equation}
\begin{aligned}
\hat{a}_e^{\dagger}(t)+\hat{a}_e(t)&= \hat{a}_e^{\dagger}(0)+\hat{a}_e(0), \label{Eq:tdo1} \\[0em]
\hat{a}_e^{\dagger}(t)-\hat{a}_e(t)&= 2\ii t\left[\hat{a}_e^{\dagger}(0)+\hat{a}_e(0)\right]+ \hat{a}_e^{\dagger}(0)-\hat{a}_e(0).
\end{aligned}
\end{equation}
Similarly
\begin{equation}
\begin{aligned}
\hat{a}_o^{\dagger}(t)-\hat{a}_o(t)&= \hat{a}_o^{\dagger}(0)-\hat{a}_o(0), \label{Eq:tdo2} \\
\hat{a}_o^{\dagger}(t)+\hat{a}_o(t)&= 2\ii t\left[\hat{a}_o^{\dagger}(0)-\hat{a}_o(0)\right]+ \hat{a}_o^{\dagger}(0)+\hat{a}_o(0). \end{aligned}
\end{equation}
From Eqs.~\eqref{Eq:tdo1} and~\eqref{Eq:tdo2}, one gets that
\begin{equation}
\begin{aligned}
\hat{a}_1^{\dagger}(t)&=(1+\ii t)\hat{a}_{1}^{\dagger}(0)+\ii t\hat{a}_{2}(0), \\
\hat{a}_{2}^{\dagger}(t)&=(1+\ii t)\hat{a}_{2}^{\dagger}(0)+\ii t\hat{a}_{1}(0).
\label{teo}
\end{aligned}
\end{equation}

We evaluate $f_k(t)$ by substituting the results from Eq.~\eqref{teo} in Eq.~\eqref{rd}. For large $T$ (truncated Hilbert space dimension), the computation of the reduced density matrix using Eq.~\eqref{Eq:inverse} involves the addition and multiplication of very large numbers. For $\Delta=1$, all those calculations can be done using integers, which allows us to obtain results with the desired numerical accuracy. In general, for $\Delta \neq 1$, the expressions for $\hat{a}_{1}^{\dagger}(t)$ and $\hat{a}^{\dagger}_{2}(t)$ contain irrational numbers, as a result of which numerical errors render this approach ineffective. 

\bibliographystyle{biblev1}
\bibliography{ref}

\end{document}